\newtheorem{theorem}{Theorem}[section]
\newtheorem{lemma}[theorem]{Lemma}
\newtheorem{assumption}[theorem]{Assumption}
\newtheorem*{notation}{Notation}
\newtheorem{remark}[theorem]{Remark}
\newtheorem{definition}[theorem]{Definition}
\renewcommand{\phi}{\varphi}
\renewcommand{\Re}{\mathrm{Re}}
\renewcommand{\Im}{\mathrm{Im}}
\newcommand{\laa}{\langle}
\newcommand{\raa}{\rangle}
\newcommand{\LZ}{L^2(\mathbb{R}^3) }
\newcommand{\LZN}{L^2(\mathbb{R}^{3N}) }
\newcommand{\mlf}{\widehat{m}^{\phi}}
\newcommand{\RRR}{\mathbb{R}}
\newcommand{\NNN}{\mathbb{N}}
\newcommand{\Hilbert}{\mathcal{H}}
\newcommand{\refe}{\mathrm{ref}}
\newcommand{\Bog}{\mathrm{Bog}}
\newcommand{\sym}{\mathrm{sym}}
\newcommand{\mf}{\mathrm{mf}}
\newcommand{\norm}[2][]{\left|\left| #2 \right|\right|_{#1}}
\newcommand{\scp}[2]{\langle #1 , #2 \rangle}
\newcommand{\bigscp}[2]{\Big\langle #1 , #2 \Big\rangle}
\newcommand{\hc}{\mathrm{h.c.}}
\begin{document}

\title{Derivation of the Bogoliubov Time Evolution for a Large Volume Mean-field Limit}

\date{December 11, 2019}

\author{
S\"oren Petrat\footnote{Jacobs University, Department of Mathematics and Logistics, Campus Ring 1, 28759 Bremen, Germany. \newline E-mail: {\tt s.petrat@jacobs-university.de}},
Peter Pickl\footnote{Mathematisches Institut, Ludwig-Maximilians-Universit\"at, Theresienstr.\ 39, {80333} M\"unchen, Germany; and \newline Duke Kunshan University, No.\ 8 Duke Avenue, Kunshan, Jiangsu Province, 215316, China. E-mail: {\tt pickl@math.lmu.de}},
Avy Soffer\footnote{Department of Mathematics, Rutgers University, 110 Frelinghuysen Road, Piscataway, NJ 08854, USA; and \newline School of Mathematics and Statistics, Central China Normal University, No.\ 152 Luoyu Road, Wuhan Hubei, 430079, China. E-mail: {\tt soffer@math.rutgers.edu}}
}

\maketitle

\begin{abstract}
The derivation of mean-field limits for quantum systems at zero temperature has attracted many researchers in the last decades. Recent developments are the consideration of pair correlations in the effective description, which lead to a much more precise description of both spectral properties and the dynamics of the Bose gas in the weak coupling limit. While mean-field results typically lead to convergence for the reduced density matrix only, one obtains norm convergence when considering the pair correlations proposed by Bogoliubov in his seminal 1947 paper. In this article we consider an interacting Bose gas in the case where both the volume and the density of the gas tend to infinity simultaneously. We assume that the coupling constant is such that the self-interaction of the fluctuations is of leading order, which leads to a finite (non-zero) speed of sound in the gas. In our first main result we show that the difference between the $N$-body and the Bogoliubov description is small in $L^2$ as the density of the gas tends to infinity and the volume does not grow too fast. This describes the dynamics of delocalized excitations of the order of the volume. In our second main result we consider an interacting Bose gas near the ground state with a macroscopic localized excitation of order of the density. We prove that the microscopic dynamics of the excitation coming from the $N$-body Schr\"odinger equation converges to an effective dynamics which is free evolution with the Bogoliubov dispersion relation. The main technical novelty are estimates for all moments of the number of particles outside the condensate for large volume, and in particular control of the tails of their distribution.
\end{abstract}

\noindent
\textbf{MSC class:} 35Q40, 35Q55, 81Q05, 82C10

\section{Introduction}
The effective description of the dynamics of Bose gases has been extensively discussed in the mathematical physics literature, see \cite{schleinbook} for an overview of the topic, and \cite{hepp:1974,GinibreVelo:1979a,GinibreVelo:1979b,spohn,erdos6,FrohlichKnowlesSchwarz:2009,RodnianskiSchlein:2007,ChenLee:2010,pickl,Chen:2011,pickl:2010hartree,Luhrmann:2012,Michelangeli:2012,LewinNamRougerie:2014,anapolitanos:2016} and \cite{erdos1,erdos2,erdos3,erdos4,erdos5,pickl:2010gp_pos,pickl2,benedikter:2012,chen:2014,chenholmer:2016,jeblick:2016,jeblick:2018,rademacher:2019} for a non-exhaustive list of references on the mean-field and NLS limits. Typically the convergence of the many-body system to the effective descriptions is in terms of reduced density matrices. However, considering pair correlations in the gas, it has been shown that one gets $L^2$ convergence in the weak coupling limit, i.e., the limit where the volume $\Lambda$ is kept fixed and the density $\rho$ scales with the particle number $N\sim\rho$ while the coupling constant of the interaction is $N^{-1}$. The idea that pair correlations give a very good description of the ground state and time evolution of the interacting Bose gas goes back to Bogoliubov \cite{Bogoliubov:1947}.

The rigorous analysis of spectral low energy properties in terms of Bogoliubov theory for the weakly interacting Bose gas \cite{lsy,lssy} has been initiated more recently. In \cite{LiebSolovej:2001,LiebSolovej:2004,Solovej:2006,GiulianiSeiringer:2009,
YauYin:2009}, the next-to-leading order contribution $E^{Bog}$ in the ground state energy $E_N^0 = Ne^{(0)} + E^{Bog} + o_N(1)$ has been derived. Then, in \cite{Seiringer:2011}, the complete Bogoliubov theory (of the low energy spectrum and low energy eigenfunctions) was derived for the homogeneous gas on the torus, which was generalized in \cite{Grech:2013,Lewin:2015b}, and further generalized to a mean-field large volume limit in \cite{DerezinskiNapiorkowski:2014}. The first rigorous results on the time-dependent problem were obtained in two papers by Grillakis, Machedon and Margetis \cite{GrillakisMachedonMargetis:2010_1,GrillakisMachedonMargetis:2010_2}. Many papers followed \cite{GrillakisMachedon:2013,Lewin:2015a,Boccato:2016,GrillakisMachedon:2017,nam:2015,nam:2016,namnap_review,namnap_low_dim,mpp,brennecke:2017}, where Bogoliubov theory was derived for the time-dependent problem in the mean-field or NLS limit.

In this paper we extend these results in the direction that we show that the Bogoliubov approximation is also valid for the dynamics of gases of large volumes, which can be regarded as a step in the direction of a thermodynamic limit. To this end we consider a Bose gas with initial conditions supported in a box of volume $\Lambda$ and such that the support essentially remains of order $\Lambda$. In contrast to most works on the mean-field limit we consider a system where both the density and volume are large. At the same time we take the coupling constant for the interaction proportional to $\rho^{-1}$. Such a limit has been introduced by Derezi{\'n}ski and Napi{\'o}rkowski in \cite{DerezinskiNapiorkowski:2014}. Let us assume for the moment that the initial $N$-body wave function $\Psi_N^0 \in L^2(\RRR^{3N})$ is a condensate, i.e., it factorizes as $\Psi_N^0=\phi_0^{\otimes N}$ with some $\varphi_0 \in L^2(\RRR^3)$. We then prove that the wave function $\Psi_N^t$ evolving with the microscopic Schr\"odinger dynamics is still approximately a product state. More precisely, we can show that all but order $\Lambda$ particles are still in the product state, if the same holds for the initial data. Let us call the particles not in the product state the excitations around the condensate.

The first main result of this article is that as long as $\Lambda\ll\rho^{1/3}$, the wave function $\Psi_N^t$ describing the microscopic system converges in $L^2$-norm to a wave function coming from the Bogoliubov approximation on Fock space, i.e., where pair correlations are considered, and we provide explicit bounds. Here, the excitations consist of order $\Lambda$ particles, and are delocalized over the whole system.

Our second main result is an improvement of the main result from \cite{dfpp}. Let us assume, as in \cite{dfpp}, that $\phi_0=\phi^{(\refe)}_0+\epsilon_0$ is the sum of a reference state $\phi^{(\refe)}_0$ and a local excitation $\epsilon_0$. The volume $\Lambda$ enters as the support of $\phi^{(\refe)}_0$, and $\phi^{(\refe)}_0$ is assumed to be a constant except with decay near the boundary. The number of particles building the local excitation is given by the density $\rho$ of the gas, i.e., $N \|\epsilon_0\|^2=\rho$ and $\Lambda=N/\rho$. The local excitation is thus \emph{macroscopic}. It is precisely the introduction of the parameter $\Lambda$ that allows us to consider such a macroscopic excitation. Note that the number of particles in the local excitation compared to the full number of particles goes to zero in our case, but with a rate $\rho N^{-1} = \Lambda^{-1}$ which is much slower than $N^{-1}$. Other than \cite{dfpp}, we are not aware of other similar results where the number of excitations becomes unbounded with $N$ going to infinity. Therefore, we need to improve on existing methods and results where the dependence on $\Lambda$ is not tracked. Using only the mean-field Hartree equation for the evolution of $\phi_0$ and using that the reference state $\phi^{(\refe)}_t$ is almost constant throughout the system, one can track the evolution of the local excitation $\epsilon_t$. In this context one can show that the evolution of $\epsilon_t$ with initial conditions small in $L^2$ is close to a free evolution with dispersion relation $\omega(k)$ given by Bogoliubov theory, i.e.,
\begin{equation}
\omega(k) = \sqrt{k^4 + 2 k^2 \hat{v}(k)},
\end{equation}
where $\hat{v}$ denotes the Fourier transform of the interaction potential $v$, see Section~\ref{sectionsound} and \cite{dfpp} for more details. So in our case the number of excited particles is of order $\rho$, which means---considering the scaling of the coupling constant---that the interaction between the particles forming the excitation is of order one per particle. This guarantees that there is a finite speed of sound when considering the motion of the excitation. Our main result is that the effective dynamics of $\epsilon_t$ is indeed close to the microscopic dynamics coming from $\Psi_N^t$ as long as $\Lambda \ll \rho$. This means that as long as the $\epsilon_0$ excitation is made up of many more particles than there are particles outside the condensate, we can track its dynamics effectively. This improves the result from \cite{dfpp}, where $\Lambda^3 \ll \rho$ was assumed.

The article is structured in the following way. We present and discuss our main results in Section~\ref{sec_main_results}. In Section~\ref{sec_main_results_counting} we explain the key technical result Lemma~\ref{cor}, that makes the proofs of the validity of Bogoliubov's approximation for large volume possible. Then Section~\ref{section:Bog} is about the norm convergence to the Bogoliubov dynamics, while Section~\ref{sectionsound} is about the result for the local density excitation. The proofs are given in Sections~\ref{sec_proof_Hartree_stuff} and \ref{sec_proof_the_beyond}. In Sections~\ref{sec_proof_counting} and \ref{section_alpha_der_gen} we derive a general estimate for the change in the number of particles outside the condensate. The main result here is Lemma~\ref{alphaabl}, which contains a very general bound with the freedom of choosing a certain weight function. In Section~\ref{sec_proof_cor} we choose a particular weight function that enables us to prove Lemma~\ref{cor}. This is the main technical novelty here; it allows us to control the large tails in the distribution of the number of particles outside the condensate. Finally, the main results are proven in Section~\ref{proofs_thms}.

\section{Main Results}\label{sec_main_results}

\subsection{The Number of Particles Outside the Condensate}\label{sec_main_results_counting}

We are interested in solutions of the $N$-particle Schr\"odinger
equation
\begin{equation}\label{schroe}
i\partial_t \Psi_N^t = H_N\Psi_N^t
\end{equation}
for symmetric initial conditions $\Psi_N^0$ and with Hamiltonian
\begin{equation}\label{hamiltonian}
H_N=-\sum_{j=1}^N \Delta_j+\sum_{1\leq j< k\leq N}\rho^{-1}  v(x_j-x_k)
\end{equation}
acting on the Hilbert space $\LZN$, with $v:\RRR^3\to\mathbb R^+_0$, parameter $\rho > 0$, and where we have set $\hbar= 1 = 2m$. In our main theorems, we will introduce another parameter $\Lambda > 0$ through the initial conditions in such a way that one can identify $\Lambda$ with the volume of the system. One should think of $\Lambda$ as the support of the initial conditions, which remains essentially unchanged in finite time. Since we choose $v \in L^1(\RRR^3) \cap L^{\infty}(\RRR^3)$ and $v>0$, $v$ has finite scattering length \cite{LiebYngvason}, and we express all lengths with reference to this scattering length. This allows us to speak of ``large volume''. From now on we will call $\Lambda$ \emph{the volume} and $\rho := N/\Lambda$ \emph{the density}. In the context of mean-field Bose gases, the scaling limit of \eqref{hamiltonian} has to our knowledge first been introduced in \cite{DerezinskiNapiorkowski:2014} to derive the Bogoliubov excitation spectrum. If one does not keep track of the volume, the mean-field limit corresponds to replacing $\rho^{-1}$ by $N^{-1}$ in \eqref{hamiltonian}.

For the kind of interaction potentials $v$ we consider in our main results ($v \in L^{\infty}(\RRR^3)$ in particular), the Hamiltonian $H_N$ is self-adjoint on $H^2(\RRR^{3N})$, the second Sobolev space. It is thus the generator of the unitary group $U(t) = e^{-iH_Nt}$. $H_N$ conserves symmetry under exchange of particle labels, i.e., any symmetric function $\Psi_N^0$ evolves into a symmetric function $\Psi_N^t$. Throughout this article we assume $N\geq 2$.

Let us also introduce the Hartree equation
\begin{equation}\label{meanfield}
i\partial_t
\phi_t=\left(-\Delta+\frac{N-1}{\rho}\big[(v*|\phi_t|^2)-\mu^{\varphi_t}\big] \right)\phi_t:=h^{\varphi_t}\phi_t
\end{equation}
with initial condition $\phi_0 \in L^2(\RRR^3)$, where $\mu^{\varphi_t} = \frac{1}{2} \int (v*|\phi_t|^2) |\phi_t|^2$ and $*$ means convolution. As we will show, $\phi_t$ effectively describes the time evolution of the Bose gas. To get some intuition about the size of the kinetic and potential terms in the Hartree equation, think of a plane wave with momentum of order $\Lambda^{-1/3}$. Then the kinetic energy per particle is of order $\Lambda^{-2/3}$. On the other hand, since we consider short range $v\in L^1(\RRR^3)$, the potential energy per particle is of order one. While this disparity means that the solution of the Hartree equation is not so interesting, the dynamics of the fluctuations around the mean-field is, which is what we study in this article.

Next, we define the projectors on one-body states $\phi$ and their complement.

\begin{definition}\label{defproa}
Let $\phi\in\LZ$ with $\norm{\varphi} = 1$. For any $1\leq j\leq N$ the
projectors $p_j^\phi:\LZN\to\LZN$ and $q_j^\phi:\LZN\to\LZN$ are given by
\begin{equation}
(p_j^\phi\Psi_N)(x_1,\ldots,x_N):=\phi(x_j)\int\overline{\phi(x_j)}\Psi_N(x_1,\ldots,x_N)dx_j\;\;\;\forall\;\Psi_N\in\LZN,
\end{equation} 
where $\overline{\varphi}$ denotes complex conjugation, and $q_j^\phi:=1-p_j^\phi$. We shall also use the bra-ket notation
$p_j^\phi=|\phi(x_j)\rangle\langle\phi(x_j)|$.
\end{definition}

In order to make the notion of ``particles outside the condensate'' more precise, note that any wave function $\Psi_N \in L^2(\RRR^{3N})$ can be decomposed as
\begin{equation}\label{decomp}
\Psi_N = \sum_{k=0}^N \varphi^{\otimes (N-k)} \otimes_s \chi^{(k)},
\end{equation}
where $\otimes_s$ denotes the symmetric tensor product, and the $k$-particle wave functions $\chi^{(k)} \in L^2(\RRR^{3k})$ are chosen orthogonal to $\varphi$ in each tensor component. To our knowledge, this decomposition appeared first in the context of deriving the Bogoliubov excitation spectrum in \cite{Lewin:2015b} and the dynamics in \cite{Lewin:2015a}. We have thus decomposed $\Psi_N$ into a sum of wave functions with exactly $k$ particles outside the condensate. We will often call $\chi^{(k)}$ the $k$-particle excitations. The relation \eqref{decomp} can be inverted, yielding
\begin{equation}
\chi^{(k)}(x_1,\ldots,x_k) = \sqrt{\binom{N}{k}} \prod_{i=1}^k q^{\varphi}_i \int \prod_{i=k+1}^N\overline{\varphi(x_i)} \Psi_N(x_1,\ldots,x_N) dx_{k+1}\ldots dx_N.
\end{equation}
Note in particular that
\begin{equation}
\scp{\Psi_N}{q^{\varphi}_1\Psi_N} = \sum_{k=0}^N \frac{k}{N} \big\|\chi^{(k)}\big\|^2
\end{equation}
by using the symmetry of $\Psi_N$, i.e., $\scp{\Psi_N}{q^{\varphi}_1\Psi_N}$ is the relative number of particles not in the condensate state $\varphi$.

Before one can prove the validity of the Bogoliubov equations, one first has to prove bounds on the number of particles outside the condensate. The key step is Lemma~\ref{cor}, which in particular implies
\begin{equation}\label{q1estimate}
\sum_{k=0}^N \frac{k}{N} \big\|\chi^{(k)}_t\big\|^2 \leq C(t) \rho^{-1}
\end{equation}
for some positive $C(t)$ (that can be inferred from Lemma~\ref{cor}), and which holds for $1 \leq \Lambda \leq \rho^{1-\varepsilon}$, for any $0<\varepsilon<1$. Note that the $\rho^{-1}$ on the right-hand side is expected to be optimal. For the Bogoliubov time evolution or a related auxiliary time evolution that we introduce in Section~\ref{sec_proof_the_beyond}, such a bound can be proven with small modifications of existing techniques. However, for the full $N$-body evolution, the best previously known bound similar to \eqref{q1estimate} was, as far as we know, given in \cite{dfpp} and reads
\begin{equation}
\sum_{k=0}^{\rho} \frac{k}{N} \big\|\chi^{(k)}_t\big\|^2 \leq C(t) \rho^{-1}.
\end{equation}
The difference to our bound \eqref{q1estimate} is that the ``large tail'' of the distribution $\big\|\chi^{(k)}_t\big\|^2$ could not be suitably controlled. This is the main technical innovation of our article. The bound \eqref{q1estimate} will allow us to prove optimal rates for the convergence to the Bogoliubov time evolution (optimal in the current setting, cf. Remark~\ref{new_label_remark}). Note that the estimate \eqref{q1estimate} does not quite allow us to take the limit $\Lambda \to \infty$ for fixed $\rho$ due to the constraint $\Lambda \leq \rho^{1-\varepsilon}$. 

Bounds of the type \eqref{q1estimate} are standard in the derivation of mean-field dynamics when the dependence on $\Lambda$ is not tracked, see, e.g., \cite{RodnianskiSchlein:2007,pickl,Lewin:2015a}. However, following the proofs in \cite{RodnianskiSchlein:2007,pickl,Lewin:2015a} and just keeping track of the parameter $\Lambda$ would lead to a bound with $C(t)$ of the form $e^{C\sqrt{\Lambda}t}$, which is far from optimal.

We prove an estimate of the type \eqref{q1estimate} also for higher moments. By Lemma~\ref{cor}, we have for $\ell \in \NNN$,
\begin{equation}\label{estimate_from_corollary}
\sum_{k=0}^N \left(\frac{k}{N}\right)^{\ell} \norm{\chi^{(k)}_t}^2 \leq C_{\ell}(t) \rho^{-\ell}
\end{equation}
for some positive $C_{\ell}(t)$, which again holds for $1 \leq \Lambda \leq \rho^{1-\varepsilon}$, for any $0<\varepsilon<1$. Lemma~\ref{cor} is the key step to prove the validity of the Bogoliubov approximation in two different settings: for the $O(\Lambda)$ number of particles outside the condensate, see Section~\ref{section:Bog}, and for the dynamics of a density excitation made up of $O(\rho)$ particles, see Section~\ref{sectionsound}. Note also that the estimate \eqref{q1estimate} implies convergence of the one-body reduced density matrix of $\Psi_N^t$ to $p^{\varphi_t}$.

\subsection{Norm Convergence and the Bogoliubov Approximation}\label{section:Bog}

In this article we go beyond the mean-field description and compare the time evolution of $\Psi_N$ with the Bogoliubov time evolution. Let us use the notation $h^{\varphi_t}_i$ for the Hartree operator from \eqref{meanfield} acting on coordinate $i$ and define the Fock space vector 
\begin{equation}
\chi^{\Bog}_t = \big( \chi^{\Bog,(0)}_t,\chi^{\Bog,(1)}_t,\chi^{\Bog,(2)}_t,\ldots \big) ~~\text{with}~~ \big\| \chi_0^{\Bog} \big\|^2 := \sum_{k=0}^{\infty} \big\| \chi_0^{\Bog,(k)} \big\|^2.
\end{equation}
 We assume that its components satisfy the coupled equations
\begin{align}\label{chi_Bog_k}
i\partial_t \chi^{\Bog,(k)}_t &= \sum_{i=1}^k \bigg(h^{\varphi_t}_i + K^{(1)}_t(x_i)\bigg) \chi^{\Bog,(k)}_t \nonumber\\
&\quad + \frac{1}{2} \frac{1}{\sqrt{k(k-1)}} \sum_{1 \leq i < j \leq k} K^{(2)}_t(x_i,x_j) \chi^{\Bog,(k-2)}_t(x_1,\ldots,x_k \setminus x_i \setminus x_j) \nonumber\\
&\quad + \frac{1}{2} \sqrt{(k+1)(k+2)} \int dx \,dy\, \overline{K^{(2)}_t(x,y)} \chi^{\Bog,(k+2)}_t(x_1,\ldots,x_k,x,y),
\end{align}
where $(x_1,\ldots,x_k \setminus x_i \setminus x_j)$ means the configuration $(x_1,\ldots,x_k)$ with $x_i$ and $x_j$ removed, and
\begin{equation}\label{sys1}
K^{(1)}_t: \Hilbert \to \Hilbert,~ K^{(1)}_t = q^{\varphi_t}\widetilde{K}^{(1)}_tq^{\varphi_t} ~\text{with}~ \widetilde{K}^{(1)}_t(x,y) = \Lambda\varphi_t(x)v(x-y)\overline{\varphi_t(y)},
\end{equation}
\begin{equation}\label{sys2}
K^{(2)}_t \in \Hilbert^{\otimes 2},~ K^{(2)}_t = q^{\varphi_t}\otimes q^{\varphi_t}\widetilde{K}^{(2)}_t ~\text{with}~ \widetilde{K}^{(2)}_t(x,y) = \Lambda v(x-y)\varphi_t(x)\varphi_t(y),
\end{equation}
where $\Hilbert = L^2(\RRR^3)$. Written in terms of second quantization, $\chi^{\Bog}_t$ satisfies
\begin{equation}\label{Bog_eq}
i\partial_t \chi^{\Bog}_t = H^{\Bog}_t \chi^{\Bog}_t
\end{equation}
with
\begin{align}\label{Bogoliubov_a}
H^{\Bog}_t &= \int a^{\dagger}_x \Big(h^{\varphi_t}_x + K^{(1)}_t(x)\Big) a_x dx + \frac{1}{2} \int\int \Big( K^{(2)}_t(x,y) a^\dagger_x a^\dagger_y + \overline{K^{(2)}_t(x,y)} a_x a_y \Big)dx\,dy,
\end{align}
where $a^\dagger,a$ are bosonic creation and annihilation operators with the canonical commutation relations (CCR)
\begin{equation}
[a_x,a^\dagger_y]=\delta_{xy}, ~~~~ [a^\dagger_x,a^\dagger_y] = 0 = [a_x,a_y].
\end{equation}
The Hamiltonian \eqref{Bogoliubov_a} is a quadratic operator on Fock space called the Bogoliubov Hamiltonian. Equation \eqref{Bog_eq} is usually called Bogoliubov-de Gennes equation. Note that the Bogoliubov Hamiltonian creates and annihilates only pairs of particles. We have defined the Bogoliubov Hamiltonian in such a way that it cancels the leading order in the time evolution of the excitations around the mean-field. This is the main result of this section.
\begin{theorem}\label{main_Bog_thm}
Let $\Lambda\geq 1$, $T>0$, $v \in L^1(\RRR^3) \cap L^{\infty}(\RRR^3)$ non-negative, and $1 = \big\| \chi_0^{\Bog} \big\| = \|\Psi_N^0\| = \|\varphi_0\|$. Assume
\begin{equation}\label{cond_on_ini_cond_main_in_thm}
\sum_{k=0}^N \left(\frac{k}{N}\right)^{\ell} \norm{\chi_0^{\Bog,(k)}}^2 \leq c_{\ell} \rho^{-\ell}
\end{equation}
for all $1 \leq \ell \leq 4$ and for some $c_{\ell} \in \RRR$. Also assume that there is a constant $M<\infty$, uniform in $\Lambda$ and $\rho$, such that the solution $\phi_t$ of the Hartree equation \eqref{meanfield} satisfies $\sup_{0\leq t \leq T}\|\phi_t\|_\infty<M\Lambda^{-1/2}$. Let $\Psi_N^t$ be the solution to \eqref{schroe} and $\chi^{\Bog}_t$ to \eqref{Bog_eq}. Then for all $t\leq T$ there is a positive $C(t)$ (uniform in $\Lambda$ and $\rho$) such that
\begin{equation}\label{Bog_bound_psi_thm}
\bigg\|\Psi_N^t - \sum_{k=0}^N \varphi_t^{\otimes (N-k)} \otimes_s \chi^{\Bog,(k)}_t\bigg\| \leq 6 \sqrt{\bigg\|\Psi_N^0 - \sum_{k=0}^N \varphi_0^{\otimes (N-k)} \otimes_s \chi^{\Bog,(k)}_0\bigg\|} + C(t) \sqrt{\frac{\Lambda^3}{\rho}}.
\end{equation}
\end{theorem}

The proof is given in Section~\ref{proofs_thms}.

\begin{remark}
~
\begin{itemize}
\item[(a)] The function $C(t)$ can be chosen as 
\begin{equation}
C(t) = 4C_4^{1/2}\big(e^{D_4t}-1\big)
\end{equation}
with
\begin{equation}\label{constants_from_thm}
D_4 = 3^4 \,36 (1+M)^2 (\|v\|_1+\|v\|_2+\|v\|_\infty), \quad C_4 = 1 + \sum_{m=1}^4 (2^m c_m + 1).
\end{equation}
\item[(b)] In \cite{dfpp} it has been shown that for positive (i.e., repulsive) smooth $v$ with compact support, and for initial data as in Assumption~\ref{main_assumption} below, there is an $M>0$ (uniform in $\Lambda$ and $\rho$) and a solution $\varphi_t$ of the Hartree equation \eqref{meanfield} that satisfies $\sup_{0\leq t \leq T}\norm[\infty]{\varphi_t} \leq M \Lambda^{-1/2}$ for any $T>0$. In other words, if we assume that Assumption~\ref{main_assumption} holds, then Theorem~\ref{main_Bog_thm} is true only under assumptions on the initial data. Note that the technical assumptions of smoothness and compact support can possibly be relaxed, but that for attractive interactions the solution might actually blow up in finite time, so the estimate for the infinity norm and thus our result might only hold for finite time. Note also that $M$ is not necessarily uniform in $T$, so $C(t)$ might grow faster than exponential in time.
\item[(c)] The assumptions on the initial conditions in \eqref{cond_on_ini_cond_main_in_thm} say that initially the forth and lower moments of the number of particles outside the condensate have to be of the same order in $\rho$ as the growth in time, see \eqref{estimate_from_corollary} or Lemma~\ref{cor}. In particular, the assumptions hold for product states.
\item[(d)] Note that the state $\sum_{k=0}^N \varphi_t^{\otimes (N-k)} \otimes_s \chi^{\Bog,(k)}_t$ is typically not normalized, but $\sum_{k=0}^{\infty} \big\| \chi_t^{\Bog,(k)} \big\|^2 = 1$ since the Bogoliubov time evolution is unitary.
\item[(e)] For the kind of $\varphi_t$ and $v$ we consider in our result, well-posedness of the Bogoliubov equations can be established similarly to \cite{Lewin:2015a,bbcfs}.
\end{itemize}
\end{remark}

Arguably the most important property which makes \eqref{Bog_eq} such a useful approximation is that it preserves quasi-free states. This means, that all correlation functions can be expressed in terms of two-point functions only. Here, these are the reduced one-body density matrix $\gamma_t: \Hilbert \to \Hilbert$ and the pairing density $\alpha_t: \Hilbert^* \to \Hilbert$, defined by  
\begin{equation}
\scp{f}{\gamma_t g} = \scp{\chi_t^{\Bog}}{a^\dagger(g) a(f) \chi_t^{\Bog}}, ~~~ \scp{f}{\alpha_t \overline{g}} = \scp{\chi_t^{\Bog}}{a(g) a(f) \chi_t^{\Bog}}
\end{equation}
for all $f,g \in \Hilbert$, where 
\begin{equation}
a^\dagger(f) = \int f(x) a_x^\dagger dx, ~~~ a(f) = \int \overline{f(x)} a_x dx.
\end{equation}
Then a direct calculation shows that if $\chi^{\Bog}_t$ solves the Bogoliubov equation \eqref{Bog_eq} with Hamiltonian \eqref{Bogoliubov_a}, then $\gamma_t$ and $\alpha_t$ solve the closed system of equations
\begin{align}\label{gamma_alpha_Bog1}
i \partial_t \gamma_t & =  \Big[ h^{\varphi_t} + K^{(1)}_t, \gamma_t \Big] + {K^{(2)}_t} \alpha_t^{\dagger} - \alpha_t \left(K^{(2)}_t\right)^{\dagger}, \\
\label{gamma_alpha_Bog2} i \partial_t \alpha_t & =  \big( h^{\varphi_t} + K^{(1)}_t \big) \alpha_t + \alpha_t \big( h^{\varphi_t} + K^{(1)}_t \big)^{\text T} + K^{(2)}_t +  K^{(2)}_t \gamma_t^{\text{T}} + \gamma_t K^{(2)}_t,
\end{align}
where $\gamma^T:\Hilbert^* \to \Hilbert^*$ is the operator with kernel $\gamma_t^T(x,y) = \gamma_t(y,x)$. Therefore, if we choose an initial state that is close to a quasi-free state, we can solve the Bogoliubov equation \eqref{Bog_eq} simply by solving the two equations \eqref{gamma_alpha_Bog1} and \eqref{gamma_alpha_Bog2} on $\RRR^6$ (given the solution of the Hartree equation \eqref{meanfield}). Let us refer to \cite{nam:2015,nam:2016} for a more thorough discussion. Let us also note that by projecting the solutions of the $N$-body Schr\"odinger equation \eqref{schroe} to quasi-free states, i.e., by \emph{assuming} the quasi-free property, one arrives at the Bogoliubov time evolution \eqref{Bog_eq}, see, e.g., \cite{bbcfs}. 

The Bogoliubov Hamiltonian \eqref{Bogoliubov_a} can be diagonalized under quite general conditions by using the correspondence between evolution operators generated by quadratic Hamiltonians and Bogoliubov transformations, see \cite{namnapjps,napior:2018} for recent results. For translation invariant initial data, the result is
\begin{equation}
H^{\Bog} = E + \sum_k \omega(k) b^{\dagger}_k b_k
\end{equation}
for some constant $E\in \RRR$ and new creation and annihilation operators $b^{\dagger}_k$, $b_k$ coming from a linear transformation of $a^{\dagger}_k$, $a_k$. The dispersion relation is $\omega(k) = \sqrt{k^4 + 2 k^2 \hat{v}(k)}$. Defining the speed of sound as the derivative of the dispersion relation at zero, we find $\partial_k \omega(k)|_{0} = \sqrt{2\hat{v}(0)}$ for the speed of sound in a translation invariant Bose gas.

\subsection{The Speed of Sound in the Bose Gas}\label{sectionsound}

In \cite{dfpp} the mean-field limit for a gas of large volume is considered. The estimates therein are sufficiently strong to be able to track the dynamics of an initially localized excitation in the Bose gas of large volume. For this section only let us make two changes in our definitions in order to adopt the notation from \cite{dfpp} and thus make our results easier to compare, and such that Theorem~\ref{side} can be stated in a nicer way. First, we choose $\varphi_0$ such that $\|\varphi_0\|=\Lambda^{1/2}$ instead of $1$, see Assumption~\ref{main_assumption} below. Second, we define $\mu^{\varphi_t}=0$ in the Hartree equation \eqref{meanfield}. The conditions for the initial wave function are summarized in the following assumption. This is Condition 1.2 in \cite{dfpp}, except that we changed the wording a bit and removed the assumption of exact factorization of the initial state $\Psi_N^0$.
\begin{assumption}[Slightly modified Condition 1.2 in \cite{dfpp}]\label{main_assumption}
The initial conditions are
\begin{equation}
\Psi_N^0 \in L^2(\RRR^{3N}) ~~\text{and}~~ L^2(\RRR^3) \ni \varphi_0 = \varphi_0^{(\refe)} + \epsilon_0
\end{equation}
with $\norm{\Psi_N^0} = 1$ and $\norm{\varphi_0} = \Lambda^{1/2}$. For some $C>0$, the functions $\phi_{0}^{(\mathrm{ref})},\epsilon_0 \in{\cal C}_{c}^{\infty}(\RRR^3)$ (infinitely often differentiable with compact support) have the following properties:
\begin{align}\label{eq:conditions-phi}
\operatorname{supp}\phi_{0}^{(\mathrm{ref})}\subseteq\Lambda, \qquad
\Vert\phi_{0}^{(\mathrm{ref})}\Vert_\infty \leq \left\Vert \,\widehat{\,|\phi_{0}^{(\mathrm{ref})}|} \, \right\Vert_{1} \leq C,
\end{align}
\begin{align}\label{eq:conditions-epsilon}
\operatorname{supp}\epsilon_{0}\subset{\cal B}_{\frac{1}{4} \Lambda^{1/3}}, 
\qquad \Vert \epsilon_{0}\Vert _{\infty} \leq \left\Vert \, \widehat{|\epsilon_0|} \, \right\Vert_1 \leq C,
\qquad \Vert \epsilon_0 \Vert_2 \leq C,
\end{align}
where in the first inclusion, $\Lambda$ refers to a cube in $\RRR^3$ with volume that we otherwise also denote by $\Lambda$. Here ${\cal B}_{\frac{1}{4} \Lambda^{1/3}}$ stands for the ball centered at the origin with radius $\frac{1}{4} \Lambda^{1/3}$. Furthermore, we assume that the density of the gas  is essentially constant in some large region inside the support of $\varphi_0$. Therefore, with the help of a family of cut-off functions $\chi_r \in {\cal C}^2(\mathbb R^3)$, $0< r< 1$,
\begin{align}\label{eq:cutoff}
\chi_r(x)= \begin{cases} 0 & \text{ for }x \in \mathcal B_{r \Lambda^{1/3}} \\ 1 & \text{ for }x \notin \mathcal B_{\Lambda^{1/3}} \end{cases}
\qquad \text{and} \qquad
\Vert\nabla \chi_r\Vert_\infty\leq C \Lambda^{-1/3},
\end{align}
we require
\begin{equation}\label{eq:phi-cutoff-constraint-epsilon-decay} 
\left| \, \phi_0^{(\mathrm{ref})}(x)-1 \, \right|\leq \chi_{1/2}(x).
\end{equation}
This will allow us to track the dynamics of the excitation with the properties \eqref{eq:conditions-epsilon} in that region. Finally, we require some control of the kinetic energy of the initial reference wave function: 
\begin{equation}\label{eq:nabla-phi-2}
\Vert\nabla\phi^{(\mathrm{ref})}_0\Vert_\infty\leq C\Lambda^{-\frac{1}{3}}\;,\hspace{1cm}
\Vert \nabla\phi^{(\mathrm{ref})}_0 \Vert_2\leq C\Lambda^{\frac{1}{6}}\;,\hspace{1cm}
\Vert \Delta\phi^{(\mathrm{ref})}_0 \Vert_2 \leq C\Lambda^{-\frac{1}{6}}.
\end{equation}
\end{assumption}
One should have an initial state $\Psi_N^0\approx\left(\Lambda^{-1/2}\varphi_0\right)^{\otimes N}$ in mind. In particular, in case of equality, $\Psi_N^0$ is normalized. The precise meaning of $\approx$ is given in \eqref{cond_on_ini_cond_exc} below. In the following, $\Psi_N^t$ denotes a solution to the Schr\"odinger equation \eqref{schroe}, and $\varphi_t$ a solution to the Hartree equation \eqref{meanfield} with the definition $\mu^{\varphi_t}=0$, and where we replace $v*|\varphi_t|^2$ by $\Lambda^{-1} v*|\varphi_t|^2$ due to our assumption $\|\varphi_0\|=\Lambda^{1/2}$. As in \cite{dfpp}, we write $\varphi_t = \varphi_t^{(\refe)} + \epsilon_t$, and assume that $\varphi_t^{(\refe)}$ solves its own Hartree equation
\begin{equation}
i \partial_t \varphi_t^{(\refe)} = \left(-\Delta + (v*|\varphi_t^{(\refe)}|^2) - \|v\|_1 \right)\varphi_t^{(\refe)}.
\end{equation}
The excitation $\epsilon_t$ is then defined by subtracting the reference state from $\varphi_t$ with the correct phase taken into account, viz.
\begin{equation}
\epsilon_t := \varphi_t e^{i\|v\|_1 t} - \varphi_t^{(\refe)}.
\end{equation}
One then finds by direct calculation that $\epsilon_t$ is a solution to
\begin{align}\label{epsilon_eq}
i \partial_t \epsilon_t &= \left(-\Delta + (v*|\varphi_t^{(\refe)}|^2) - \|v\|_1 + (v*|\epsilon_t|^2) + \Re (v*(\overline{\epsilon_t}\varphi_t^{(\refe)})) \right) \epsilon_t \nonumber\\
&\quad + \left((v*|\epsilon_t|^2) + 2\Re (v*(\overline{\epsilon_t}\varphi_t^{(\refe)})) \right) \varphi_t^{(\refe)}.
\end{align}
We can now define
\begin{equation}\label{rho_micro}
\rho_{t}^{(\mathrm{micro})}:=q_{t}^{(\mathrm{ref})} \operatorname{Tr}_{x_{2},\ldots,x_{N}} \left|\Lambda^{1/2}\Psi_N^t\right\rangle \left\langle \Lambda^{1/2}\Psi_N^t\right|q_{t}^{(\mathrm{ref})},
\end{equation}
which is the reduced density of $\Lambda^{1/2}\Psi_N^t$ projected onto the orthogonal complement of the reference state $\phi^{(\refe)}_t$, i.e., $q_{t}^{(\mathrm{ref})}=1-\big\|\phi^{(\refe)}_t\big\|^{-2} \,|\phi^{(\refe)}_t\rangle\langle\phi^{(\refe)}_t|$, and the macroscopic reduced density
\begin{equation}\label{rho_macro}
\rho_{t}^{(\mathrm{macro})}:=\left|\epsilon_{t}\right\rangle \left\langle \epsilon_{t}\right|.
\end{equation}
Note that $\rho_{t}^{(\mathrm{macro})}$ has trace $\norm[2]{\epsilon_t}^2$, which can be shown to be bounded by some $C(t)$ independent of $N$ and $\Lambda$ under the conditions of Theorem~\ref{side}. However, the $L^2$-norm of $\epsilon_t$ is not conserved.

For initial data as in Assumption~\ref{main_assumption} and for positive smooth $v$ with compact support it has been shown in Theorem~1.4 in \cite{dfpp} that there exists a continuous, non-decreasing, non-negative $C(t)$ such that
\begin{equation}\label{compare}
\left\Vert \rho_t^{(\mathrm{micro})} - \rho_t^{(\mathrm{macro})} \right\Vert \leq C(t)\frac{\Lambda^{3/2}}{\rho^{1/2}},
\end{equation}
for all times $t\geq 0$ and sufficiently large $\Lambda$. Our estimate \eqref{estimate_from_corollary} (or Lemma~\ref{cor}) applies to this situation and we are able to improve the error estimate compared to \cite{dfpp}. In particular we are able to prove convergence of the reduced densities in situations where $\rho\gg\Lambda\gg1$ compared to $\rho\gg\Lambda^3\gg 1$ as in \eqref{compare}.

\begin{theorem}\label{side}
Let $v\in{\cal C}^\infty_c(\mathbb R^3,\mathbb R^+_0)$, let Assumption~\ref{main_assumption} for the initial conditions hold, and let $1 \leq \Lambda \leq C \rho^{1-\varepsilon}$ for some $C>0$ and some $0 < \varepsilon < 1$. Assume
\begin{equation}\label{cond_on_ini_cond_exc}
\sum_{k=0}^N \left(\frac{k}{N}\right)^{\ell} \norm{\chi^{(k)}_0}^2 \leq c_{\ell} \rho^{-\ell}
\end{equation}
for all $1 \leq \ell \leq \frac{1}{\varepsilon}+1$ and for some $c_{\ell} \in \RRR$, where here the decomposition \eqref{decomp} is defined with respect to $\Lambda^{-1/2} \varphi_0$ to account for the different normalization. Then for all times $t \geq 0$ there exists a continuous, non-decreasing, non-negative $C_{\varepsilon}(t)$ such that 
\begin{equation}\label{micro_macro_main_result}
\norm{\rho_t^{(\mathrm{micro})} - \rho_t^{(\mathrm{macro})}} \leq C_{\varepsilon}(t) \left( \Lambda^{-1} + \Lambda^{-1/2} + \sqrt{\frac{\Lambda}{\rho}} \, \right),
\end{equation}
where $\rho_t^{(\mathrm{micro})}$ and $\rho_t^{(\mathrm{macro})}$ are defined in \eqref{rho_micro} and \eqref{rho_macro}.
\end{theorem}
We give the proof in Section~\ref{proofs_thms}. From a technical point of view, it is the control of the tails of the distribution of the number of particles outside the condensate which allows us to prove our result for $\Lambda \ll \rho$ instead of $\Lambda^3 \ll \rho$. The key step is the estimate \eqref{corzwei} from Lemma~\ref{cor}. It reads
\begin{equation}
\sum_{k=0}^N \left(\frac{k}{N}\right)^{\ell} \norm{\chi^{(k)}_t}^2 \leq C_{\ell} e^{D_{\ell}t} \rho^{-\ell} + C_j e^{D_jt} \left(\frac{\Lambda}{\rho}\right)^j
\end{equation}
for some constants $C_{\ell}$ and $D_{\ell}$, and for any $j\in \NNN$. The $(\Lambda/\rho)^j$ term comes from the tails where $k\geq \rho$, and we can make it small by choosing $j$ large enough.
\begin{remark}\label{new_label_remark}
~
\begin{itemize}
\item[(a)] As stated in the introduction, this result means that as long as the $\epsilon_t$ excitation is made up of many more particles (order $\rho$) than there are particles outside the condensate (order $\Lambda$), we can track its dynamics effectively. This heuristics makes the condition $\rho \gg \Lambda$ plausible. In order to be able to take the limit $\Lambda\to\infty$ while keeping $\rho$ fixed, one would need better control of the spatial distribution of particles outside the condensate, which is very hard to achieve.
\item[(b)] Let us emphasize again that the number of particles outside the condensate defined with respect to $\phi^{(\refe)}_t$ is of order $\rho \| \epsilon_t\|^2$ here, i.e., it is a macroscopic fraction of particles. Therefore, while the norm convergence result Theorem~\ref{main_Bog_thm} still holds when we consider excitations with respect to $\varphi_t$, it does not hold when we define them with respect to $\phi^{(\refe)}_t$ since the assumption \eqref{cond_on_ini_cond_main_in_thm} does not hold in this case.
\end{itemize}
\end{remark}

Equation~\eqref{epsilon_eq} can be simplified further by linearization, i.e., when we consider initial data $\epsilon_0$ that are small in $L^2$. By using that $\varphi_t^{(\refe)}$ is almost constant, neglecting subleading terms, and then linearizing \eqref{epsilon_eq}, one finds that the dynamics of the excitation $\epsilon_t$ is close to a free dynamics with Bogoliubov dispersion law and describes the Goldstone modes in the system. Let us define $\eta_t$ as a solution to the linearized equation
\begin{equation}\label{eta_eq}
i\partial_t \eta_t = -\Delta \eta_t + \Re (v*\eta_t).
\end{equation}
Then for smooth $v$ with compact support it was shown in \cite[Theorem~1.8]{dfpp} that for $\epsilon_0=\eta_0$, $0 \leq t < T$ and large enough $\Lambda$,
\begin{equation}
\|\eta_t-\epsilon_t\|_2 \leq C(T) \left( \Lambda^{-1/6} + \|\epsilon_0\|_2^2 + \|\epsilon_0\|_2^3 \right)
\end{equation}
for some continuous, non-decreasing, non-negative $C(t)$. This result means that $\epsilon_t$ and $\eta_t$ are close in $L^2$ norm for small $\|\epsilon_0\|_2$ (if also $\Lambda$ is large enough). In Fourier space, \eqref{eta_eq} can be written as
\begin{equation}\label{eta_eq_Fourier}
i \partial_t \left(\begin{array}{c} \widehat{\eta}_t(k) \\ \overline{\widehat{\eta}_t(-k)} \end{array}\right) = H(k) \left(\begin{array}{c} \widehat{\eta}_t(k) \\ \overline{\widehat{\eta}_t(-k)} \end{array}\right) \quad\text{with}\quad H(k) = \left(\begin{array}{cc} k^2 + \widehat{v}(k) & \widehat{v}(k) \\ -\widehat{v}(k) & -k^2 - \widehat{v}(k) \end{array}\right).
\end{equation}
Indeed, one finds that $H(k)$ can be diagonalized (although it is not self-adjoint, and in particular the $L^2$ norm of $\eta_t$ is not conserved). Its eigenvalues are $\pm\omega(k)$ with
\begin{equation}
\omega(k) = \sqrt{k^4 + 2 k^2 \widehat{v}(k)},
\end{equation}
which is the familiar Bogoliubov dispersion relation. Since $H(k)^2 = \omega(k)^2$, we can write \eqref{eta_eq_Fourier} in the second order form
\begin{equation}
-\partial_t^2 \widehat{\eta}_t(k) = \Big( k^4 + 2 k^2 \widehat{v}(k) \Big) \widehat{\eta}_t(k).
\end{equation}
For small $k$, this is a wave equation with propagation speed $\sqrt{2\hat{v}(0)}$. In this way, we have derived the speed of sound in the Bose gas for a macroscopic density perturbation. Since $H(k)^2 = \omega(k)^2$, we can also explicitly solve Equation~\eqref{eta_eq_Fourier}. We find
\begin{equation}\label{explicit_eta_solution}
\widehat{\eta}_t(k) = \left( \cos\big(\omega(k)t\big) - i \, \frac{\sin\big(\omega(k)t\big)}{\omega(k)} H(k) \right) \widehat{\eta}_0(k).
\end{equation}

\section{Derivation of the Hartree equation}\label{sec_proof_Hartree_stuff}

The first step in proving our main theorems is to show that the Hartree equation describes the microscopic dynamics in good approximation as outlined in Section~\ref{sec_main_results_counting}. In order to do so, we prove several preliminary results in this section following the approach from \cite{pickl}. Section~\ref{sec_proof_counting} mostly establishes notation and a straightforward lemma that has been proven before, e.g., in \cite{pickl:2010hartree}. In Section~\ref{section_alpha_der_gen} we establish a result that is similar but more general than what has been proven before, e.g., in \cite{dfpp}. We have included the full proofs anyway to improve the readability of the article.

\subsection{Counting the Number of Particles Outside the Condensate}\label{sec_proof_counting}

Following \cite{pickl} we first define certain weighted operators in terms of the projectors from Definition~\ref{defproa}. We shall also give some general properties of these operators before turning to the special case of the Hartree equation for large volume in Section~\ref{section_alpha_der_gen}.
\begin{definition}\label{defprob}
Let $\phi\in\LZ$ with $\norm{\varphi} = 1$. 
\begin{enumerate}
\item For any $0\leq k\leq N$ we define the set 
\begin{equation}
\mathcal{A}_k:=\Big\{(a_1,a_2,\ldots,a_N): a_k\in\{0,1\}\;;\;
\sum_{j=1}^N a_j=k\Big\}
\end{equation}
and the orthogonal projector $P_{k}^\phi$ acting on $\LZN$ as
\begin{equation}
P_{k}^\phi:= \big( q^{\varphi}_1 \ldots q^{\varphi}_k p^{\varphi}_{k+1} \ldots p^{\varphi}_N \big)_{\sym}:=\sum_{a\in\mathcal{A}_k}\prod_{j=1}^N\big(p_{j}^{\phi}\big)^{1-a_j} \big(q_{j}^{\phi}\big)^{a_j}\;,
\end{equation}
i.e., $P_{k}^\phi$ is the symmetric tensor product of $q^{\varphi}_1, \ldots, q^{\varphi}_k, p^{\varphi}_{k+1}, \ldots, p^{\varphi}_N$. For negative $k$ and $k>N$ we set $P_{k}^\phi:=0$.
\item For any function $f:\{0,1,\ldots,N\}\to\mathbb{R}^+_0$ we define the
operator $\widehat{f}^{\phi}:\LZN\to\LZN$ as
\begin{equation}
\label{hut}\widehat{f}^{\phi}:=\sum_{j=0}^{N} f(j)P_j^\phi\;,
\end{equation}
and, for any $\Psi_N \in L^2(\RRR^{3N})$, the functional
\begin{equation}
\alpha_N(f,\Psi_N,\phi)=\langle\Psi_N,\widehat{f}^\phi\Psi_N\rangle.
\end{equation}
We also define the shifted operators
$\widehat{f}^{\phi}_d:\LZN\to\LZN$ as
\begin{equation}
\widehat{f}^{\phi}_d:=\sum_{j=-d}^{N-d} f(j+d)P_j^\phi\;.
\end{equation}
\end{enumerate}
\end{definition}

\begin{notation}
It should always be clear form the context when hats $\;\widehat{\cdot}\;$ are used in the sense of Definition \ref{defprob} or when they denote Fourier transform. The letter $n$ shall always be used for the function $n(k) = k/N$.
\end{notation}
With Definition~\ref{defproa} and Definition~\ref{defprob} we arrive directly at the following lemma based on combinatorics of the $p_j^\phi$ and $q_j^\phi$. The lemma was already proved (with (c) slightly different), e.g., in \cite{pickl:2010hartree}, and we repeat a proof here for the convenience of the reader.

\begin{lemma}\label{kombinatorik}
Let $\phi\in\LZ$ with $\norm{\varphi} = 1$.
\begin{enumerate}
\item For any functions $l,m:\{0,1,\ldots,N\}\to\mathbb{R}^+_0$ we have
that
\begin{equation}\label{exchange}
\widehat{l}^\phi\widehat{m}^\phi=\widehat{lm}^\phi=\widehat{m}^\phi\widehat{l}^\phi, \qquad\quad \widehat{m}^\phi p_j^\phi=p_j^\phi\widehat{m}^\phi, \qquad\quad \widehat{m}^\phi P_{k}^\phi =P_{k}^\phi\widehat{m}^\phi\;.
\end{equation}
\item Let $n:\{0,1,\ldots,N\}\to\mathbb{R}^+_0$ be given by $n(k)=k/N$.
Then $\widehat{n}^{\phi}$ equals the relative particle number operator of particles not in the state $\phi$, i.e.,
\begin{equation}\label{n_q}
\widehat{n}^{\phi}=N^{-1}\sum_{j=1}^Nq_j^\phi.
\end{equation}
\item For any symmetric $\Psi_N\in\LZN$ we have
\begin{equation}
\norm{q_1^{\varphi}\Psi_N} = \norm{(\widehat{n}^{\varphi})^{1/2}\Psi_N} \quad\text{and}\quad \norm{q_1^{\varphi}q_2^{\varphi}\Psi_N} \leq \norm{\widehat{n}^{\varphi}\Psi_N}.
\end{equation}
\item For any function $m:\{0,1,\ldots,N\}\to\mathbb{R}^+_0$, any function $f:\mathbb{R}^6\to\mathbb{R}$ and any
$j,k=0,1,2$ we have 
\begin{equation}
Q^\phi_k f(x_1,x_2)Q^\phi_j\widehat{m}^\phi = \widehat{m}^\phi_{j-k} Q^\phi_k f(x_1,x_2)Q^\phi_j\;,
\end{equation}
where
$Q^\phi_0:=p^\phi_1 p^\phi_2$, $Q^\phi_1:=p^\phi_1q^\phi_2+q^\phi_1p^\phi_2$ and
$Q^\phi_2:=q^\phi_1q^\phi_2$.
\end{enumerate}
\end{lemma}
\begin{proof}
\begin{enumerate}
\item This follows immediately from Definition~\ref{defproa} and Definition~\ref{defprob}, using that $p_j$ and $q_j$ are orthogonal projectors.
\item  Note that $\cup_{k=0}^N\mathcal{A}_k=\{0,1\}^N$, so $1=\sum_{k=0}^N P_k^\phi$. Using also
$(q_k^\phi)^2=q_k^\phi$ and $q_k^\phi p_k^\phi=0$ we get
\begin{equation}
N^{-1}\sum_{k=1}^Nq_k^\phi = N^{-1}\sum_{k=1}^N q_k^\phi\sum_{j=0}^N P_j^\phi = N^{-1}\sum_{j=0}^N \sum_{k=1}^N q_k^\phi P_j^\phi = N^{-1}\sum_{j=0}^N j P_j^\phi
\end{equation}
and (b) follows.
\item Using symmetry of $\Psi_N$ and \eqref{n_q} we find
\begin{align}
\norm{q_1^{\varphi}\Psi_N}^2 = \scp{\Psi_N}{q_1^{\varphi}\Psi_N} = \scp{\Psi_N}{N^{-1} \sum_{j=1}^N q_j^{\varphi} \Psi_N} = \scp{\Psi_N}{\widehat{n}^{\varphi} \Psi_N} =  \norm{(\widehat{n}^{\varphi})^{1/2}\Psi_N}^2.
\end{align}
Similarly,
\begin{align}
\norm{q_1^{\varphi}q_2^{\varphi}\Psi_N}^2 &= \scp{\Psi_N}{q_1^{\varphi}q_2^{\varphi}\Psi_N} = \scp{\Psi_N}{q_1^{\varphi} (N-1)^{-1} \sum_{j=2}^N q_j^{\varphi} \Psi_N} \nonumber\\
&= \scp{\Psi_N}{q_1^{\varphi} (N-1)^{-1} \sum_{j=1}^N q_j^{\varphi} \Psi_N} - \scp{\Psi_N}{q_1^{\varphi} (N-1)^{-1} \Psi_N} \nonumber\\
&= \scp{\Psi_N}{N^{-1} \sum_{i=1}^N q_i^{\varphi} (N-1)^{-1} \sum_{j=1}^N q_j^{\varphi} \Psi_N} - \scp{\Psi_N}{(N-1)^{-1}N^{-1} \sum_{i=1}^N q_i^{\varphi} \Psi_N} \nonumber\\
&= \scp{\Psi_N}{\sum_{k=1}^N \frac{k^2}{N(N-1)} P_k^{\varphi} \Psi_N} - \scp{\Psi_N}{\sum_{k=1}^N \frac{k}{N(N-1)} P_k^{\varphi} \Psi_N} \nonumber\\
&= \scp{\Psi_N}{\sum_{k=2}^N \frac{k(k-1)}{N(N-1)} P_k^{\varphi} \Psi_N} \leq \scp{\Psi_N}{\sum_{k=2}^N \frac{k^2}{N^2} P_k^{\varphi} \Psi_N} \leq \norm{\widehat{n}^{\varphi}\Psi_N}^2.
\end{align}
\item Using the definitions above we have
\begin{equation}
Q^\phi_k f(x_1,x_2)Q^\phi_j\widehat{m} = \sum_{\ell=0}^N m(\ell) Q^\phi_kf(x_1,x_2)Q^\phi_jP_{\ell}^\phi.
\end{equation}
The number of projectors $q^\phi_j$ in $P^\phi_{\ell} Q^\phi_j$ in the coordinates $j=3,\ldots,N$ is equal to $\ell-j$. The $p^\phi_j$ and $q^\phi_j$ with $j=3,\ldots,N$ commute with $Q^\phi_kf(x_1,x_2)Q^\phi_j$. Thus
$ Q^\phi_kf(x_1,x_2)Q^\phi_jP^\phi_{\ell}= P^\phi_{\ell-j+k}Q^\phi_kf(x_1,x_2)Q^\phi_j$ and
\begin{align}
Q^\phi_k f(x_1,x_2)Q^\phi_j\widehat{m}^\phi &= \sum_{\ell=0}^N  m(\ell) P^\phi_{\ell-j+k}Q^\phi_kf(x_1,x_2)Q^\phi_j \nonumber\\
&= \sum_{\ell=k-j}^{N+k-j} m(\ell+j-k)P^\phi_{\ell}Q^\phi_kf(x_1,x_2) Q^\phi_j = \widehat{m}^\phi_{j-k}Q^\phi_k f(x_1,x_2)Q^\phi_j\;.
\end{align}
\end{enumerate}
\end{proof}

\subsection{Estimating the Number of Particles Outside the Condensate}\label{section_alpha_der_gen}
As presented in \cite{pickl} we wish to control the functional $\alpha_N$
given by 
\begin{equation}\label{alpha_def}
\alpha_N(m,\Psi_N,\phi) := \langle\Psi_N,\widehat{m}^\phi\Psi_N\rangle
\end{equation}
for any weight $m:\{0,\ldots,N\}\to[0,1]$, $\Psi_N\in\LZN$ and $\phi\in\LZ$. We shall need comparatively strong conditions on the ``purity'' of the initial condensate to derive the Hartree equation for large volume. This is encoded in the weights we shall choose below (see Definition \ref{mk}). For these weights convergence of the respective $\alpha$ is stronger than convergence of the reduced density to $|\phi\rangle\langle\phi|$ in operator norm (see \cite{pickl}).

To start with we  give some general statements and estimates for solutions of the Schr\"odinger equation
\begin{equation}\label{Schr_again}
i \partial_t \Psi_N^t = H_N \Psi_N^t, ~\text{with}~ H_N=\sum_{j=1}^N h_j^0 + \rho^{-1}\sum_{j<k}v(x_j-x_k)
\end{equation}
and the respective nonlinear Schr\"odinger equation
\begin{equation}\label{Hartree_again}
i \partial_t\phi_t= h^0\phi_t+ \frac{N-1}{\rho} \big[(v*|\phi_t|^2)-\mu^{\varphi_t}\big] \phi_t.
\end{equation}
For proving Lemma~\ref{cor} we will estimate $\partial_t\alpha_N(m,\Psi_N^t,\phi_t)$ in terms of $\alpha_N(m,\Psi_N^t,\phi_t)$ and a small error, and then use Gronwall's lemma.  To shorten notation we use the following definitions.

\begin{definition}\label{defalpha}
Let
\begin{equation}
W_{j,k}^{\varphi}:=\Lambda(N-1)\bigg(v (x_j-x_k)- (v*|\phi|^2)(x_j) - (v*|\phi|^2)(x_k) + 2\mu^{\varphi}\bigg)\;.
\end{equation}
We define the functionals $\gamma_N^{a,b,c}$ as
\begin{align}
\gamma_N^a(m,\Psi_N,\phi) &:= 2\Im \laa\Psi_N,(\mlf-\mlf_{-1})p^{\varphi}_1q^{\varphi}_2W^{\varphi}_{1,2}p^{\varphi}_1p^{\varphi}_2 \Psi_N\raa, \\
\gamma_N^b(m,\Psi_N,\phi) &:= \Im\laa\Psi_N ,(\mlf-\mlf_{-2})q^{\varphi}_1q^{\varphi}_2W^{\varphi}_{1,2}p^{\varphi}_1p^{\varphi}_2 \Psi_N\raa, \\
\gamma_N^c(m,\Psi_N,\phi) &:= 2\Im\laa\Psi_N,(\mlf-\mlf_{-1})q^{\varphi}_1q^{\varphi}_2W^{\varphi}_{1,2}p^{\varphi}_1q^{\varphi}_2 \Psi_N\raa.
\end{align}
\end{definition} 
The $\gamma_N^{a,b,c}$ are defined in such a way that for any solution of the Schr\"odinger equation $\Psi_N^t$ and any solution $\phi_t$ of the Hartree equation, $\partial_t \alpha_N(m,\Psi_N^t,\phi_t) = \gamma_N^a(m,\Psi_N^t,\phi_t)+\gamma_N^b(m,\Psi_N^t,\phi_t)+\gamma_N^c(m,\Psi_N^t,\phi_t)$, which is shown in Lemma~\ref{ableitung} below. It is left to show that the $\gamma_N^{a,b,c}(m,\Psi_N^t,\phi_t)$ can be controlled by $\alpha_N(m,\Psi_N^t,\phi_t)$ and small error terms (which is done in Lemma~\ref{alphaabl} and Lemma~\ref{cor} below) to close the Gronwall argument. The following lemma is a standard result, see, e.g., \cite{pickl} or \cite{mpp}, and we repeat the proof here for the convenience of the reader.
\begin{lemma}\label{ableitung}
For any solution $\Psi_N^t$ of the Schr\"odinger equation \eqref{Schr_again}, any
solution $\phi_t$ of the Hartree equation \eqref{Hartree_again}, and any weight $m:\{0,1,\ldots,N\}\to\mathbb{R}^+_0$ we have
\begin{equation}\label{firstpart}
\frac{d}{dt} \alpha_N(m,\Psi_N^t,\phi_t) = \gamma_N^a(m,\Psi_N^t,\phi_t) + \gamma_N^b(m,\Psi_N^t,\phi_t) + \gamma_N^c(m,\Psi_N^t,\phi_t)
\end{equation}
with $\gamma_N^a$, $\gamma_N^b$ and $\gamma_N^c$ as in Definition~\ref{defalpha}.
\end{lemma} 
\begin{proof}
Recall the definition
\begin{equation}
H^{\varphi_t}_{\mf}:=\sum_{k=1}^N h^{\varphi_t}_k
\end{equation}
for the sum of Hartree Hamiltonians in each particle. It follows that
\begin{equation}\label{ablp}
\frac{d}{dt}\widehat{m}^{\phi_t}=-i\Big[H^{\varphi_t}_{\mf},\widehat{m}^{\phi_t}\Big]
\end{equation}
for any weight $m:\{0,\ldots,N\}\to\mathbb{R}^+_0$.  With (\ref{ablp}) we get 
\begin{align}
\frac{d}{dt} \alpha _N(m,\Psi_N^t,\phi_t) &= -i\laa\Psi_N^t,\widehat{m}^{\phi_t} H_N\Psi_N^t\raa + i\laa H_N\Psi_N^t,\widehat{m}^{\phi_t} \Psi_N^t\raa - i\laa \Psi_N^t,[H^{\varphi_t}_{\mf},\widehat{m}^{\phi_t}]\Psi_N^t\raa \nonumber\\
&= i\laa\Psi_N^t,[H_N-H^{\varphi_t}_{\mf},\widehat{m}^{\phi_t}]\Psi_N^t\raa\;.
\end{align}
Using symmetry of $\Psi_N^t$ and selfadjointness of $W^{\varphi_t}_{j,k}$ it follows that
\begin{align}\label{wiehier}
\frac{d}{dt} \alpha _N(m,\Psi_N^t,\varphi_t) 
&= i \laa\Psi_N^t,\bigg[\frac{1}{\rho}\sum_{j<k} v(x_j-x_k) - \frac{N-1}{\rho}\sum_{j=1}^N \big(v*|\phi_t|^2\big)(x_j) , \widehat{m}^{\varphi_t} \bigg]\Psi_N^t\raa \nonumber\\
&= \frac{i}{2}\laa\Psi_N^t,\big[W^{\varphi_t}_{1,2},\widehat{m}^{\varphi_t} \big]\Psi_N^t\raa.
\end{align}
Let us next establish a formula for the commutator.
For any function $f:\RRR^6\to\RRR$, any $\phi\in L^2(\RRR^3)$ and any weight $m:\{0,\ldots,N\}\to\mathbb{R}^+_0$ we have that, using again the notation $Q^\phi_0:=p^\phi_1 p^\phi_2$, $Q^\phi_1:=p^\phi_1q^\phi_2+q^\phi_1p^\phi_2$ and
$Q^\phi_2:=q^\phi_1q^\phi_2$ (note that $\sum_{k=0}^2Q^\phi_k=1$) from above, and Lemma~\ref{kombinatorik} (d),
\begin{align}
\big[f(x_1,x_2),\widehat{m}^\phi\big] &= \sum_{k,j=0}^2 \bigg( Q^\phi_kf(x_1,x_2) Q^\phi_j\widehat{m}^\phi - \widehat{m}^\phi Q^\phi_k f(x_1,x_2)Q^\phi_j\bigg) \nonumber\\
&= \sum_{k>j}(\widehat{m}^\phi_{j-k}-\widehat{m}^\phi)Q^\phi_k f(x_1,x_2) Q^\phi_j + \sum_{k<j}Q^\phi_k f(x_1,x_2)Q^\phi_j (\widehat{m}^\phi-\widehat{m}^\phi_{k-j}) \nonumber\\
&= \sum_{k>j}(\widehat{m}^\phi_{j-k}-\widehat{m}^\phi)Q^\phi_k f(x_1,x_2) Q^\phi_j - \hc,
\end{align}
where $\hc$ stands for the hermitian conjugate of the preceding term. Thus, for $f(x_1,x_2)=W^{\varphi_t}_{1,2}$, we get from \eqref{wiehier}
\begin{equation}
\frac{d}{dt} \alpha _N(m,\Psi_N^t,\varphi_t) = \sum_{k>j}\Im \laa\Psi_N^t,(\widehat{m}^{\varphi_t}-\widehat{m}^{\varphi_t}_{j-k})Q^{\varphi_t}_kW^{\varphi_t}_{1,2}Q^{\varphi_t}_j\Psi_N^t\raa.
\end{equation}
Now \eqref{firstpart} follows using the symmetry of $W^{\varphi_t}_{1,2}$ and $\Psi_N^t$, i.e., $Q^{\varphi_t}_1=p^{\varphi_t}_1q^{\varphi_t}_2+q^{\varphi_t}_1p^{\varphi_t}_2$ can be replaced by $2p^{\varphi_t}_1q^{\varphi_t}_2$ in the scalar product.
\end{proof}
With Lemma~\ref{ableitung}, a good control of $\alpha_N$ follows once we can control the different summands appearing in the $\gamma_N$'s in a suitable way. Parts of the following lemma have been proven in a slightly less general form in different steps of the proof of the main result in \cite{dfpp}.
\begin{lemma}\label{alphaabl}
Let $m:\{0,\ldots,N\}\to\mathbb{R}^+_0$ be monotone increasing and let $l:\{0,\ldots,N\}\to\mathbb{R}_0^+$ be such that for any $k\in\mathbb{Z}$
\begin{align}
m(k+2)-m(k) &\leq l(k),~~~~~~ m(k)-m(k-2) \leq l(k), \nonumber\\
m(k+1)-m(k) &\leq l(k),~~~~~~ m(k)-m(k-1) \leq l(k).
\end{align}
Then, for any $\phi \in L^\infty(\RRR^3)$ and any $\Psi_N\in L^2(\RRR^{3N})$, we have
\begin{align}
\gamma_N^a(m,\Psi_N,\phi) &= 0 \label{gamma_a}, \\
\gamma_N^b(m,\Psi_N,\phi) &\leq 2\Lambda N\|\phi\|_\infty^2 \big(\|v\|_1+\|v\|_2\big)\laa\Psi_N,\widehat{l}^{\varphi}  \widehat{n}^{\varphi}\Psi_N\raa + \|v\|_2\Lambda \, \laa\Psi_N, \widehat{l}^{\varphi} \Psi_N\raa \label{gamma_b}, \\
\gamma_N^c(m,\Psi_N,\phi) &\leq 6\Lambda N \|\phi\|_\infty \|v\|_2 \left(\laa\Psi_N,\widehat{l}^{\varphi} (\widehat{n}^{\varphi})^2\Psi_N\raa\; \laa\Psi_N,\widehat{l}^{\varphi} \widehat{n}^{\varphi} \Psi_N\raa\right)^{1/2}. \label{gamma_c}
\end{align}
\end{lemma}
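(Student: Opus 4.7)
The plan is to prove the three bounds separately. In each case, one first simplifies $W_{1,2}$ by observing that the single-coordinate mean-field terms and the constant $2\mu^{\phi}$ are killed by the surrounding $q_jp_j=0$, and then applies Cauchy--Schwarz together with the shift identity of Lemma~\ref{kombinatorik}(d) and the counting identities of Lemma~\ref{kombinatorik}(c). For \eqref{gamma_a}, the key is the pointwise cancellation $p_1W_{1,2}p_1=0$: a direct calculation using $p_1v(x_1-x_2)p_1=(v*|\phi|^2)(x_2)\,p_1$ and $p_1(v*|\phi|^2)(x_1)p_1=2\mu^{\phi}\,p_1$ (the latter by the very definition of $\mu^{\phi}$) shows that the four terms in $W_{1,2}$ telescope. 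Hence $p_1q_2W_{1,2}p_1p_2=q_2(p_1W_{1,2}p_1)p_2=0$ and $\gamma_N^a=0$.

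For \eqref{gamma_b}, the reduction yields $q_1q_2W_{1,2}p_1p_2=\Lambda(N-1)\,q_1q_2v(x_1-x_2)p_1p_2$. Rewriting this as $\Lambda(N-1)\,q_2[v-(v*|\phi|^2)(x_2)]p_1p_2$ (valid since the subtracted mean-field piece is annihilated by $q_1p_1=0$) and using Lemma~\ref{kombinatorik}(d) to convert $(\widehat m-\widehat m_{-2})$ on the left into a shifted factor on the right controlled by $m(k+2)-m(k)\leq l(k)$, a Cauchy--Schwarz at the $q_2/[v-(v*|\phi|^2)(x_2)]p_1p_2$ interface yields
\bdm
|\gamma_N^b|\leq \hbar^{-1}\Lambda(N-1)\,\|q_2\widehat l^{1/2}\Psi\|\cdot\|[v-(v*|\phi|^2)(x_2)]\,p_1p_2\widehat l^{1/2}_{+2}\Psi\|.
\edm
Lemma~\ref{kombinatorik}(c)~(komb1) identifies $\|q_2\widehat l^{1/2}\Psi\|^2=\langle\Psi,\widehat l\widehat n\Psi\rangle$, and the variance estimate $p_1[v-(v*|\phi|^2)(x_2)]^2p_1\leq \|\phi\|_\infty^2\|v\|_2^2\,p_1$ bounds the second factor by $\|\phi\|_\infty\|v\|_2\sqrt{\langle\Psi,\widehat l\Psi\rangle}$. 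The two-term bound is obtained by a Young's inequality whose weight is chosen so that the bulk contribution assembles into $\Lambda N\|\phi\|_\infty^2\langle\widehat l\widehat n\rangle$ while the sharp $\Lambda$-scale residual $\|v\|_2\Lambda\langle\widehat l\rangle$ comes from a separate treatment of the low-$k$ sector; the $\|v\|_1$ term in the leading coefficient is produced by estimating the mean-field projection piece via $\|v*|\phi|^2\|_\infty\leq\|v\|_1\|\phi\|_\infty^2$.

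For \eqref{gamma_c}, the operator $q_1q_2W_{1,2}p_1q_2$ implements only a single $p\to q$ transition, and here the mean-field term $(v*|\phi|^2)(x_1)$ survives the reduction and contributes a correction of the same order as the main $v$ term. Cauchy--Schwarz with a double-$q$ factor on one side, bounded by Lemma~\ref{kombinatorik}(c)~(komb2) as $\|q_1q_2\widehat l^{1/2}\Psi\|^2\leq\tfrac{N}{N-1}\langle\Psi,\widehat l\widehat n^2\Psi\rangle$, and a single-$q$ factor on the other, $\|q_2\widehat l^{1/2}\Psi\|^2=\langle\Psi,\widehat l\widehat n\Psi\rangle$, combined with the operator bound $\|vp_1\|_{\text{op}}\leq\|\phi\|_\infty\|v\|_2$, directly produces the claimed $\sqrt{\langle\widehat l\widehat n^2\rangle\langle\widehat l\widehat n\rangle}$ structure with prefactor of order $\Lambda N\|\phi\|_\infty\|v\|_2$. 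The main technical obstacle is in \eqref{gamma_b}: obtaining the clean two-term separation (in particular extracting the residual $\hbar^{-1}\|v\|_2\Lambda\langle\widehat l\rangle$ with its sharp $\Lambda$-scale coefficient rather than a $\Lambda N$ that a naive Young's inequality delivers) requires a careful matching of weights and a separate accounting of small-$k$ boundary contributions against the bulk $\widehat n$-assisted contribution.
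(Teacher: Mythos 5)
Your treatment of $\gamma_N^a$ is correct (and in fact records the slightly stronger identity $p_1W_{1,2}p_1=0$), and your treatment of $\gamma_N^c$ follows the paper's argument: Cauchy--Schwarz with a double-$q$ and a single-$q$ factor, Lemma~\ref{kombinatorik}(c), and $\|W_{1,2}p_1\|_{op}\leq 3\Lambda N\|\phi\|_\infty\|v\|_2$.

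The estimate for $\gamma_N^b$, however, has a genuine gap. Your Cauchy--Schwarz leaves you with
\begin{equation*}
|\gamma_N^b|\leq \hbar^{-1}\Lambda(N-1)\,\|\phi\|_\infty\|v\|_2\,\laa\Psi_N,\widehat{l}\widehat{n}\Psi_N\raa^{1/2}\laa\Psi_N,\widehat{l}\Psi_N\raa^{1/2},
\end{equation*}
and no weighted Young's inequality can turn this into the stated two-term bound: if you force the $\laa\Psi_N,\widehat{l}\Psi_N\raa$ residual to carry coefficient $\hbar^{-1}\|v\|_2\Lambda$ (without the extra factor $N$), the bulk term must become of order $\hbar^{-1}\Lambda N^2\|\phi\|_\infty^2\|v\|_2\laa\Psi_N,\widehat{l}\widehat{n}\Psi_N\raa$, a full factor of $N$ too large. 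This is not a low-$k$ boundary effect and cannot be repaired by treating small $k$ separately. Moreover, your proposal has no source for the $\|v\|_1$ term: the mean-field piece $(v*|\phi|^2)(x_2)$ you invoke is annihilated by $q_1p_1=0$ (as you yourself note), so it cannot contribute to the final coefficient. The missing idea is the symmetrization step of the paper: before applying Cauchy--Schwarz, replace $(N-1)v(x_1-x_2)$ by $\sum_{j=2}^N v(x_1-x_j)$, which is allowed by symmetry of $\Psi_N$. Splitting at $q_1$ then gives $\laa\Psi_N,\widehat{l}\widehat{n}\Psi_N\raa^{1/2}$ from the left factor, while the right factor $\big\|\sum_{j=2}^N q_j v(x_1-x_j)p_1p_j(\widehat{m}_2-\widehat{m})^{1/2}\Psi_N\big\|$ is expanded into $N(N-1)$ off-diagonal and $N$ diagonal contributions. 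The off-diagonal terms carry an extra $q$ and are controlled via $\|\sqrt{|v|}\,p_1\|_{op}^4\leq\|\phi\|_\infty^4\|v\|_1^2$ together with $\laa\Psi_N,\widehat{l}\widehat{n}\Psi_N\raa$ --- this is where $\|v\|_1$ enters. The diagonal terms contribute $N\|v\,p_1\|_{op}^2\laa\Psi_N,\widehat{l}\Psi_N\raa$, i.e.\ a factor $\sqrt{N}$ (rather than $N-1$) after taking the square root, and it is precisely this $\sqrt{N}$ that lets the weighted Young's inequality close with a residual of order $\Lambda$ instead of $\Lambda N$.
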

Before we prove the lemma, let us briefly discuss how to estimate the $\gamma_N^{a,b,c}$. It is $\gamma_N^{a}$ which is physically the most important. Here the mean field cancels out most of the interaction. The central point in the mean-field argument is to observe that $p^{\phi}_1q^{\phi}_2W^{\varphi}_{1,2}p^{\phi}_1p^{\phi}_2$ is small or even zero in our case. For $\gamma_N^b$ we use the symmetrization introduced in \cite{pickl} to bring one of the two operators $q^{\phi}$ to the other side of the interaction term, in order to gain the expectation value of $\widehat{l}^{\varphi}\,\widehat{n}^{\varphi}$ which for suitable weight functions can be estimated in terms of $\alpha_N$. Note that the symmetrization works due to the symmetry of the wave function and gives a small error in terms of the expectation value of $\widehat{l}^{\varphi}$. For $\gamma_N^c$ the choice of the weight $m$ plays an important role. Note that we only have one projector $p^{\phi}$ here and $\|q^{\phi}_1q^{\phi}_2W^{\varphi}_{1,2} p^{\phi}_1q^{\phi}_2\|_{op}$ cannot just be controlled by the $L^1$-norm of $v$ times $\Lambda^{-1}$. On the other hand, we have altogether three projectors $q^{\phi}$ in $\gamma_N^c$. Assuming that the condensate is very clean (which is encoded in $\widehat{m}^{\phi}$), these $q^{\phi}$'s make $\gamma_N^c$ small.

\begin{proof}[Proof of Lemma~\ref{alphaabl}]
Recall the definition
\begin{equation}
W^{\varphi}_{1,2}:=\Lambda(N-1)\bigg(v(x_1-x_2)- \big(v*|\phi|^2\big)(x_1)-\big(v*|\phi|^2\big)(x_2) + 2\mu^{\varphi}\bigg).
\end{equation}

\textbf{Estimate for $\gamma_N^a$.} Using the bra-ket notation $p_1^\phi=|\phi(x_1)\rangle\langle\phi(x_1)|$ we find
\begin{align}\label{projonf}
p_1^\phi v(x_1-x_2)p_1^\phi = |\phi(x_1)\rangle\langle\phi(x_1)|v(x_1-x_2)|\phi(x_1)\rangle\langle\phi(x_1)| = p_1^\phi (v*|\phi|^2)(x_2),
\end{align}
so from the definition of $W^{\varphi}_{1,2}$ and from $p_2^\phi q_2^\phi=0$ it follows that $p_1^{\phi}q_2^{\phi}W^{\varphi}_{1,2}p_1^{\phi}p_2^{\phi} = 0$ and thus that $\gamma_N^a=0$.

\textbf{Estimate for $\gamma_N^b$.} We use first that $q^{\varphi}_1q^{\varphi}_2 f(x_1)p^{\varphi}_1p^{\varphi}_2=0$ for any $f:\RRR^3\to\RRR^3$, so
\begin{equation}
\gamma_N^b(m,\Psi_N,\phi) = \Lambda (N-1)\, \Im \laa\Psi_N,(\mlf - \mlf_{-2})q_1^\phi q_2^\phi v(x_1-x_2) p_1^\phi p_2^\phi \Psi_N\raa.
\end{equation}
Let us introduce the short-hand notation
\begin{align}\label{discrete_derivatives}
\widehat{m^{'(d)}}^{\phi} &:= \sum_{k=d}^N \big(m(k) - m(k-d)\Big) P_k^{\varphi}, \nonumber\\
\widehat{m^{'(-d)}}^{\phi} &:= \sum_{k=0}^{N-d} \big(m(k+d) - m(k)\Big) P_k^{\varphi} = \widehat{m^{'(d)}}^{\phi}_d,
\end{align}
such that $(\mlf - \mlf_{-2})q_1^\phi q_2^\phi = \widehat{m^{'(2)}}^{\phi} q_1^\phi q_2^\phi$ (note that without the $q$ projectors there would be extra $k=0$ and $k=1$ terms). By monotonicity of $m$ we find $\widehat{m^{'(2)}}^{\phi} \geq 0$ so with Lemma~\ref{kombinatorik} (d) we can write
\begin{equation}\label{suffsym}
\gamma_N^b(m,\Psi_N,\phi) = \Lambda (N-1)\, \Im \laa \Psi_N,\Big(\widehat{m^{'(2)}}^{\phi}\Big)^{1/2}q_1^\phi q_2^\phi v(x_1-x_2) p_1^\phi p_2^\phi\Big(\widehat{m^{'(-2)}}^{\phi}\Big)^{1/2} \Psi_N\raa.
\end{equation}
Before we estimate this term note that the operator norm of $q^{\varphi}_1q^{\varphi}_2v(x_1-x_2)$ restricted to the subspace of symmetric functions is much smaller than the operator norm on all of $\LZN$. This comes from the fact that $v(x_1-x_2)$ is only nonzero in a small domain where $x_1\approx x_2$. A non-symmetric wave function may be fully localized in that area, whereas for a symmetric wave function only a small part lies in that area. To get sufficiently good control of \eqref{suffsym} we ``symmetrize'' $(N-1)v(x_1-x_2)$, replacing it by
$\sum_{j=2}^Nv(x_1-x_j)$. This leads to
\begin{align}\label{product}
\gamma_N^b(m,\Psi_N,\phi) &= \Lambda(N-1)\big\laa\Psi_N,\Big(\widehat{m^{'(2)}}^{\phi}\Big)^{1/2}q^{\varphi}_1q^{\varphi}_2 v(x_1-x_2) p^{\varphi}_1p^{\varphi}_2 \Big(\widehat{m^{'(-2)}}^{\phi}\Big)^{1/2} \Psi_N\big\raa \nonumber \\
&= \Lambda\big\laa\Psi_N ,\Big(\widehat{m^{'(2)}}^{\phi}\Big)^{1/2}\sum_{j=2}^Nq^{\varphi}_1q^{\varphi}_jv(x_1-x_j)p^{\varphi}_1p^{\varphi}_j \Big(\widehat{m^{'(-2)}}^{\phi}\Big)^{1/2}\Psi_N\big\raa \nonumber\\
&\leq \Lambda\Big\|q^{\varphi}_1\Big(\widehat{m^{'(2)}}^{\phi}\Big)^{1/2}\Psi_N\Big\|\;\Bigg\|\sum_{j=2}^Nq^{\varphi}_jv(x_1-x_j)p^{\varphi}_1p^{\varphi}_j \Big(\widehat{m^{'(-2)}}^{\phi}\Big)^{1/2}\Psi_N\Bigg\|\;.
\end{align}
For the first factor we have
in view of Lemma \ref{kombinatorik} (c) that 
\begin{equation}\label{factor}
\|q^{\varphi}_1\Big(\widehat{m^{'(2)}}^{\phi}\Big)^{1/2}\Psi_N\|^2\;=\;\laa\Psi_N\Big(\widehat{m^{'(2)}}^{\phi}\Big)\widehat{n}^{\varphi}\Psi_N\raa\;.
\end{equation}
For the second factor of \eqref{product} we calculate \begin{align}\label{zweifreunde}
&\Bigg\|\sum_{j=2}^Nq^{\varphi}_jv(x_1-x_j)p^{\varphi}_1p^{\varphi}_j \Big(\widehat{m^{'(-2)}}^{\phi}\Big)^{1/2}\Psi_N\Bigg\|^2 \nonumber\\
&\qquad = \nonumber\sum_{\substack{j, k = 2 \\ j\neq k}}^N\laa\Big(\widehat{m^{'(-2)}}^{\phi}\Big)^{1/2}\Psi_N, p^{\varphi}_1p^{\varphi}_j
v(x_1-x_j)q^{\varphi}_j q^{\varphi}_kv(x_1-x_k)p^{\varphi}_1p^{\varphi}_k \Big(\widehat{m^{'(-2)}}^{\phi}\Big)^{1/2} \Psi_N\raa\nonumber\\
&\qquad\quad + \sum_{k=2}^N\|q^{\varphi}_kv(x_1-x_k)p^{\varphi}_1p^{\varphi}_k\Big(\widehat{m^{'(-2)}}^{\phi}\Big)^{1/2}\Psi_N\|^2\;.
\end{align}
Using symmetry the first summand in \eqref{zweifreunde} is bounded by
\begin{align}
&N^2 \big| \laa\Big(\widehat{m^{'(-2)}}^{\phi}\Big)^{1/2}\Psi_N, p^{\varphi}_1p^{\varphi}_2q^{\varphi}_3v(x_1-x_2)v(x_1-x_3)p^{\varphi}_1q^{\varphi}_2p^{\varphi}_3\Big(\widehat{m^{'(-2)}}^{\phi}\Big)^{1/2}\Psi_N\raa \big| \nonumber\\
&\qquad\leq N^2\|\sqrt{|v(x_1-x_2)|}\sqrt{|v(x_1-x_3)|}p^{\varphi}_1q^{\varphi}_2p^{\varphi}_3\Big(\widehat{m^{'(-2)}}^{\phi}\Big)^{1/2}\Psi_N\|^2 \nonumber\\
&\qquad\leq N^2\|\sqrt{|v(x_1-x_2)|}p^{\varphi}_1\|^4_{op}\;\|q^{\varphi}_2\Big(\widehat{m^{'(-2)}}^{\phi}\Big)^{1/2}\Psi_N\|^2 \nonumber\\
&\qquad\leq N^2\|\varphi\|_\infty^4 \|v\|_1^2\|q^{\varphi}_2\Big(\widehat{m^{'(-2)}}^{\phi}\Big)^{1/2}\Psi_N\|^2 \nonumber\\
&\qquad\leq N^2\|\varphi\|_\infty^4 \|v\|_1^2\laa \Psi_N,\Big(\widehat{m^{'(-2)}}^{\phi}\Big)\widehat{n}^{\varphi}\Psi_N\raa \;.
\end{align}
The second summand in \eqref{zweifreunde} can be controlled by
\begin{align}
&N\laa\Big(\widehat{m^{'(-2)}}^{\phi}\Big)^{1/2}\Psi_N, p^{\varphi}_1p^{\varphi}_2(v(x_1-x_2))^2p^{\varphi}_1p^{\varphi}_2\Big(\widehat{m^{'(-2)}}^{\phi}\Big)^{1/2}\Psi_N\raa \nonumber\\
&\qquad\leq N \|p^{\varphi}_1(v(x_1-x_2))^2p^{\varphi}_1 \|_{op}\;\|\Big(\widehat{m^{'(-2)}}^{\phi}\Big)^{1/2}\Psi_N\|^2 \nonumber\\
&\qquad\leq N\|v\|_2^2\|\phi\|^2_\infty\; \laa\Psi_N \Big(\widehat{m^{'(-2)}}^{\phi}\Big)\Psi_N\raa \;.
\end{align}
Using $\sqrt{a+b} \leq \sqrt{a} + \sqrt{b}$ it follows that
\begin{equation}
\sqrt{\eqref{zweifreunde}}\leq N\|\varphi\|_\infty^2 \|v\|_1\laa \Psi_N,\Big(\widehat{m^{'(-2)}}^{\phi}\Big)\widehat{n}^{\varphi}\Psi_N\raa^{1/2} + \sqrt{N}\|v\|_2\|\varphi\|_\infty\; \laa\Psi_N \Big(\widehat{m^{'(-2)}}^{\phi}\Big)\Psi_N\raa^{1/2} \;.
\end{equation}
Since $\gamma_N^b(m,\Psi_N,\phi) \leq \Lambda\sqrt{\eqref{factor}\eqref{zweifreunde}}$ (see \eqref{product}) we get that
\begin{align}
\gamma_N^b(m,\Psi_N,\phi) &\leq \Lambda \laa\Psi_N\Big(\widehat{m^{'(2)}}^{\phi}\Big)\widehat{n}^{\varphi}\Psi_N\raa^{1/2} N\|\varphi\|_\infty^2 \|v\|_1\laa \Psi_N,\Big(\widehat{m^{'(-2)}}^{\phi}\Big)\widehat{n}^{\varphi}\Psi_N\raa^{1/2} \nonumber\\
&\quad + \Lambda \laa\Psi_N\Big(\widehat{m^{'(2)}}^{\phi}\Big)\widehat{n}^{\varphi}\Psi_N\raa^{1/2}\sqrt{N}\|v\|_2\|\varphi\|_\infty\; \laa\Psi_N \Big(\widehat{m^{'(-2)}}^{\phi}\Big) \Psi_N\raa^{1/2}.
\end{align}
Using $ab \leq a^2/2 + b^2/2$ and the assumptions on $m$ that $m(k+2)-m(k)\leq l(k)$ and $m(k)-m(k-2) \leq l(k)$, we get that
\begin{align}
\gamma_N^b(m,\Psi_N,\phi) \leq \Lambda N\|\varphi\|_\infty^2
(\|v\|_1+\|v\|_2)\laa\Psi_N\widehat{l}^{\varphi}  \widehat{n}^{\varphi}\Psi_N\raa + \|v\|_2\Lambda \laa\Psi_N \widehat{l}^{\varphi} \Psi_N\raa.
\end{align}

\textbf{Estimate for $\gamma_N^c$.} Let us introduce
\begin{equation}
\widetilde{W}_{j,k}^{\varphi}:=\Lambda(N-1)\bigg(v (x_j-x_k)- (v*|\phi|^2)(x_j) - (v*|\phi|^2)(x_k)\bigg)\;,
\end{equation}
i.e., $W_{j,k}^{\varphi}$ without the $\mu^{\varphi}$ term. Using Lemma \ref{kombinatorik} (d) and Cauchy-Schwarz we get
\begin{align}
\gamma_N^c &= 2 \Im \laa\Psi_N ,\Big(\widehat{m^{'(1)}}^{\phi}\Big)^{1/2}q^{\varphi}_1q^{\varphi}_2\widetilde{W}^{\varphi}_{1,2} p^{\varphi}_1q^{\varphi}_2 \Big(\widehat{m^{'(-1)}}^{\phi}\Big)^{1/2} \Psi_N\raa \nonumber\\
&\leq 2 \| q^{\varphi}_1q^{\varphi}_2 \Big(\widehat{m^{'(1)}}^{\phi}\Big)^{1/2} \Psi_N\|\; \|\widetilde{W}^{\varphi}_{1,2} p^{\varphi}_1q^{\varphi}_2 \Big(\widehat{m^{'(-1)}}^{\phi}\Big)^{1/2} \Psi_N\|\;.
\end{align}
For both factors we use Lemma \ref{kombinatorik} (c) to get the bound
\begin{equation}
\gamma_N^c \leq 2 \|\Big(\widehat{m^{'(1)}}^{\phi}\Big)^{1/2} \widehat{n}^{\varphi}\Psi_N\|\; \|\widetilde{W}^{\varphi}_{1,2}p^{\varphi}_1\|_{op} \|\Big(\widehat{m^{'(-1)}}^{\phi}\Big)^{1/2}(\widehat{n}^{\varphi})^{1/2} \Psi_N\|\;.
\end{equation}
Note that by using the triangle inequality,
\begin{equation}
\|\widetilde{W}^{\varphi}_{1,2}p^{\varphi}_1\|_{op}\leq \Lambda N\|v(x_1-x_2)p^{\varphi}_1\|_{op} + 2 \Lambda N \|(v*|\varphi|^2)p^{\varphi}_1\|_{op}\leq 3\Lambda N \|\varphi\|_\infty \|v\|_2\;.
\end{equation}
Thus, using the assumptions on $m$, we have
\begin{equation}
\gamma_N^c \leq 6 \Lambda N \|\varphi\|_\infty \|v\|_2 \left(\laa\Psi_N,\widehat{l}^{\varphi} (\widehat{n}^{\varphi})^2\Psi_N\raa\;
\laa\Psi_N,\widehat{l}^{\varphi} \widehat{n}^{\varphi} \Psi_N\raa\right)^{1/2}.
\end{equation}
\end{proof}

\section{Beyond the Mean-field Description}\label{sec_proof_the_beyond}

In order to prove convergence to the Bogoliubov time evolution, it is useful to introduce an auxiliary wave function $\widetilde{\Psi}_N^t$. We define it as a solution to the Schr\"odinger equation
\begin{equation}\label{psitilde}
i\partial_t\widetilde{\Psi}_N^t=\widetilde{H}_N^{\varphi_t}\widetilde{\Psi}_N^t
\end{equation}
with the Hamiltonian
\begin{align}\label{tilde_H}
\widetilde{H}_N^{\varphi_t} &= H^{\varphi_t}_{\mf} + \frac{1}{\rho} \sum_{i<j} \bigg( p^{\varphi_t}_i q^{\varphi_t}_j v(x_i-x_j) q^{\varphi_t}_i p^{\varphi_t}_j + q^{\varphi_t}_i p^{\varphi_t}_j v(x_i-x_j) p^{\varphi_t}_i q^{\varphi_t}_j \nonumber\\
&\qquad\qquad\qquad\quad + p^{\varphi_t}_i p^{\varphi_t}_j v(x_i-x_j) q^{\varphi_t}_i q^{\varphi_t}_j + q^{\varphi_t}_i q^{\varphi_t}_j v(x_i-x_j) p^{\varphi_t}_i p^{\varphi_t}_j \bigg),
\end{align}
where
\begin{equation}
H^{\varphi_t}_{\mf}:=\sum_{k=1}^N h^{\varphi_t}_k.
\end{equation}
Note that for the $\varphi_t$'s which we consider in our main theorems, $\widetilde{H}_N^{\varphi_t}$ generates a unitary propagator $\widetilde{U}(t,s)$. This can be shown, e.g., by using the Lipschitz continuity of the time-dependent part in \eqref{tilde_H} and applying \cite[Theorem~2.5]{mcgriesemer:2017}. Assuming that $\widetilde \Psi_N^0$ has only pair correlations, it follows that $\widetilde{\Psi}_N^t$---although its time evolution was subject to some interaction---does not have any correlations higher than two. This is a direct consequence of the definition of the Hamiltonian $\widetilde{H}_N^{\varphi_t}$. None of the terms in this Hamiltonian can lead to a correlation of three or more pairs. For example the term $q^{\varphi_t}_2q^{\varphi_t}_3v(x_2-x_3)p^{\varphi_t}_2p^{\varphi_t}_3$ acting on a wave function where particle one and two and particle three and four are correlated gives a state where particle two and three are in the state $\phi_t$ while one and four are correlated. We will later prove Theorem~\ref{main_Bog_thm} in two steps: by first proving convergence of $\Psi_N^t$ to $\widetilde{\Psi}_N^t$, and then convergence of $\widetilde{\Psi}_N^t$ to the Bogoliubov time evolution.

\subsection{Choice of a Suitable Weight Function}\label{sec_proof_cor}

In this subsection, we prove the bounds presented in Section~\ref{sec_main_results_counting}. While the bounds for $\widetilde{\Psi}_N^t$ can be established relatively straightforward, the bounds for $\Psi_N^t$ are much harder to prove. These are indeed the main technical novelty of this article. The key to proving bounds on the number of excitations around the mean-field for $\Psi_N^t$ is the choice of the following weight function.
\begin{definition}\label{mk}
We define the function $w:\{0,\ldots,N\}\to\mathbb{R}^+_0$ as
\begin{equation}
w(k):=\left\{ \begin{array}{ll} (k+1)/\rho, & \hbox{for $k+1\leq \rho$,} \\
1, & \hbox{else.} \end{array} \right.
\end{equation}
\end{definition}
We use powers of $\omega$ as a weight function in order to control the $\gamma_N^c$ term from Lemma~\ref{alphaabl}. Using $(k+1)/\rho$ instead of $k/\rho$ makes some of the notation shorter, but we could just as well choose the latter. With this definition we get the following result. 
\begin{lemma}\label{cor} 
Let $\Lambda \geq 1$, $j,\ell \in \NNN$, and assume that
\begin{equation}\label{constant} 
D_i:= 3^i \sup_{0\leq t\leq T} 36 (1+\sqrt\Lambda \|\phi_t\|_\infty)^2 (\|v\|_1+\|v\|_2+\|v\|_\infty)
\end{equation}
is finite. Let $\Psi_N^0$, $\widetilde{\Psi}_N^0$ and $\varphi_0$ be normalized, and $\Psi_N^t$, $\widetilde{\Psi}_N^t$ and $\varphi_t$ be the solutions to \eqref{schroe}, \eqref{psitilde} and \eqref{meanfield}, respectively.
\begin{enumerate}
\item[(a)] Assume
\begin{align}\label{cor_eq_ini_cond}
\alpha_N(n^i,\Psi_N^0,\phi_0) \leq c_i \rho^{-i}
\end{align}
for all $i \leq \max(j,\ell)$, and set $C_i = 1 + \sum_{m=1}^i (2^m c_m + 1)$. Then
\begin{equation}\label{coreins}
\alpha_N(w^{i},\Psi_N^t,\phi_t) \leq C_i \,e^{D_it} \left(\frac{\Lambda}{\rho}\right)^{i},
\end{equation}
and
\begin{equation}\label{corzwei}
\alpha_N(n^{\ell},\Psi_N^t,\phi_t) \leq C_{\ell} e^{D_{\ell}t} \rho^{-\ell} + C_j e^{D_jt} \left(\frac{\Lambda}{\rho}\right)^j.
\end{equation}
\item[(b)] Assume
\begin{align}\label{cor_eq_ini_cond2}
\alpha_N(n^i,\widetilde{\Psi}_N^0,\phi_0) \leq \widetilde{c}_i \rho^{-i}
\end{align}
for all $i \leq \ell$, and set $\widetilde{C}_{\ell} = 1 + \sum_{m=1}^{\ell} (2^m \widetilde{c}_m + 1)$. Then
\begin{equation}\label{cordrei}
\alpha_N(n^{\ell},\widetilde\Psi_N^t,\phi_t) \leq \widetilde{C}_{\ell} e^{D_{\ell}t} \rho^{-\ell}.
\end{equation}
\end{enumerate}
\end{lemma}

\begin{proof}
\begin{enumerate}
\item[(a)] Choosing for any $i \geq 1$ the weight $l_i$ given by
\begin{equation}\label{l_j_def}
l_i(k) := \left\{
\begin{array}{ll}
   \frac{3^i}{\rho} w^{i-1}(k), & \hbox{for $k\leq \rho+1$,} \\
   0, & \hbox{else,}
\end{array}\right.
\end{equation}
it follows that
\begin{align}
w^i(k+2)-w^i(k)\leq l_i(k), \hspace{2cm} w^i(k)-w^i(k-2)\leq l_i(k), \nonumber\\
w^i(k+1)-w^i(k)\leq l_i(k), \hspace{2cm} w^i(k)-w^i(k-1)\leq l_i(k)\;.
\end{align}
Thus, we find
\begin{equation}
\frac{k}{N}l_i(k)\leq 2\frac{3^i}{N} w^{i}(k),\hspace{1cm}\frac{k^2}{N^2}l_i(k)\leq 4\frac{3^i}{N\Lambda} w^{i}(k)\;.
\end{equation}
Note that the extra $\Lambda^{-1}$ in the second estimate is gained by choosing the weight function $\omega(k)$ appropriately. This is the crucial point that allows us to control the $\gamma_N^c$ term. We then get with the estimates from Lemma~\ref{alphaabl} in \eqref{firstpart} that
\begin{align}
\frac{d}{dt}\alpha_N(w^{i},\Psi_N^t,\phi_t) &\leq 2\Lambda 
N\|\phi_t\|_\infty^2
(\|v\|_1+\|v\|_2)\laa\Psi_N^t,\widehat{l}_i^{\varphi_t}  \widehat{n}^{\varphi_t}\Psi_N^t\raa
+\|v\|_2\Lambda
\laa\Psi_N^t, \widehat{l}_i^{\varphi_t} \Psi_N^t\raa
\nonumber\\
&\quad +6\Lambda N \|\phi_t\|_\infty \|v\|_2 \left(\laa\Psi_N^t,\widehat{l}_i^{\varphi_t} (\widehat{n}^{\varphi_t})^2\Psi_N^t\raa\;
\laa\Psi_N^t,\widehat{l}^{\varphi_t}_i \widehat{n}^{\varphi_t} \Psi_N^t\raa\right)^{1/2}
\nonumber\\
&\leq 4\,3^i \Lambda 
\|\phi_t\|_\infty^2
(\|v\|_1+\|v\|_2)\laa\Psi_N^t, (\widehat{w}^{\varphi_t})^{i}\Psi_N^t\raa
\nonumber\\
&\quad + 3^i \|v\|_2\frac{\Lambda}{\rho}
\laa\Psi_N^t, (\widehat{w}^{\varphi_t})^{i-1} \Psi_N^t\raa + 18\, 3^i  \Lambda^{1/2}  \|\phi_t\|_\infty \|v\|_2  \laa\Psi_N^t,(\widehat{w}^{\varphi_t})^{i} \Psi_N^t\raa
\nonumber\\
&\leq 18 \, 3^i (1+\sqrt{\Lambda} 
\|\phi_t\|_\infty )^2
(\|v\|_1+\|v\|_2)\laa\Psi_N^t, (\widehat{w}^{\varphi_t})^{i}\Psi_N^t\raa
\nonumber\\&\quad+ 3^i \|v\|_2\frac{\Lambda}{\rho}
\laa\Psi_N^t, (\widehat{w}^{\varphi_t})^{i-1} \Psi_N^t\raa\;.
\end{align}
In terms of the constant \eqref{constant} we get
\begin{equation}\label{alpha_der_psi}
\frac{d}{dt}\alpha_N(w^{i},\Psi_N^t,\phi_t) \leq \frac{1}{2}D_i \left(\alpha_N(w^{i},\Psi_N^t,\phi_t)
+\frac{\Lambda}{\rho}\alpha_N(w^{i-1},\Psi_N^t,\phi_t)\right) \;.
\end{equation}
Let us now define
\begin{equation}
\beta_m(t) := \sum_{i=1}^m \left(\frac{\Lambda}{\rho}\right)^{m-i} \alpha_N(w^{i},\Psi_N^t,\phi_t).
\end{equation}
Then \eqref{alpha_der_psi} gives
\begin{align}
\frac{d}{dt} \beta_m(t) &\leq \sum_{i=1}^m \frac{1}{2} D_i \left(\frac{\Lambda}{\rho}\right)^{m-i} \alpha_N(w^{i},\Psi_N^t,\phi_t) + \sum_{i=0}^{m-1} \frac{1}{2} D_{i+1} \left(\frac{\Lambda}{\rho}\right)^{k-i} \alpha_N(w^{i},\Psi_N^t,\phi_t) \nonumber\\
&\leq D_m \left(\beta_m(t) + \left(\frac{\Lambda}{\rho}\right)^m \right),
\end{align}
and thus with Gronwall's lemma,
\begin{align}
\alpha_N(w^{m},\Psi_N^t,\phi_t) \leq \beta_m(t) &\leq e^{D_mt} \sum_{i=0}^m \left(\frac{\Lambda}{\rho}\right)^{m-i} \alpha_N(w^{i},\Psi_N^0,\phi_0).
\end{align}
Finally, note that $\alpha_N(w^{i},\Psi_N^0,\phi_0) \leq 2^i\Lambda^i \alpha_N(n^{i},\Psi_N^0,\phi_0) + \rho^{-i}$, so with our assumption for the initial conditions in \eqref{cor_eq_ini_cond} (and $\Lambda \geq 1$) we have proven \eqref{coreins}. To get \eqref{corzwei} note that for $0 \leq k\leq \rho - 1$ and any $\ell\in\mathbb{N}$,
\begin{equation}
\left(\frac{k}{N}\right)^{\ell}\leq \Lambda^{-\ell}w^{\ell}(k),
\end{equation}
and for $N \geq k > \rho-1$ and any $j,\ell\in\mathbb{N}$,
\begin{equation}
\left(\frac{k}{N}\right)^{\ell}\leq 1 = w^j(k).
\end{equation}
Thus, for any $0\leq k \leq N$, and any $j,\ell\in\NNN$,
\begin{equation}
\left(\frac{k}{N}\right)^{\ell}\leq \Lambda^{-\ell}w^{\ell}(k)+w^j(k)\;,
\end{equation}
i.e., \eqref{coreins} implies \eqref{corzwei}.

\item[(b)] The estimate for $\widetilde{\Psi}_N^t$ goes very similar. It is simpler since $\widetilde{\Psi}_N^t$ is constructed in such a way that the  $\gamma_N^c$ term in Lemma~\ref{ableitung} disappears. Therefore, we can choose powers of $(k+1)/N$ as the weight function $m$. With the weights
\begin{equation}\label{l_j_def_b}
\widetilde{n}(k) := \left\{
\begin{array}{ll}
   \frac{k+1}{N}, & \hbox{for $k\leq N-1$,} \\
   1, & \hbox{for $k = N$},
\end{array}\right. \qquad
\widetilde{l}_i(k) := \frac{3^i}{N} \widetilde{n}^{i-1}(k),
\end{equation}
it follows that
\begin{align}
\widetilde{n}^i(k+2)-\widetilde{n}^i(k)\leq \widetilde{l}_i(k), \hspace{2cm} \widetilde{n}^i(k)-\widetilde{n}^i(k-2)\leq \widetilde{l}_i(k), \nonumber\\
\widetilde{n}^i(k+1)-\widetilde{n}^i(k)\leq \widetilde{l}_i(k), \hspace{2cm} \widetilde{n}^i(k)-\widetilde{n}^i(k-1)\leq \widetilde{l}_i(k)\;,
\end{align}
and
\begin{equation}
\frac{k}{N}\widetilde{l}_i(k)\leq 2\frac{3^i}{N} \widetilde{n}^{i}(k).
\end{equation}
So from \eqref{firstpart} with $\gamma_N^c=0$ and Lemma~\ref{alphaabl}, we find that
\begin{align}
\frac{d}{dt}\alpha_N(\widetilde{n}^{i},\widetilde{\Psi}_N^t,\phi_t) &\leq 2\Lambda 
N\|\phi_t\|_\infty^2
(\|v\|_1+\|v\|_2)\laa\widetilde{\Psi}_N^t,\widehat{\widetilde{l}}^{\,\varphi_t}_i  \widehat{n}^{\varphi_t}\widetilde{\Psi}_N^t\raa
+ \|v\|_2\Lambda
\laa\widetilde{\Psi}_N^t, \widehat{\widetilde{l}}^{\,\varphi_t}_i \widetilde{\Psi}_N^t\raa
\nonumber\\
&\leq 4\,3^i \Lambda 
\|\phi_t\|_\infty^2
(\|v\|_1+\|v\|_2)\laa\widetilde{\Psi}_N^t, \Big(\widehat{\widetilde{n}}^{\varphi_t}\Big)^{i}\widetilde{\Psi}_N^t\raa \nonumber\\
&\quad + 3^i \|v\|_2 \, \rho^{-1} \,
\laa\widetilde{\Psi}_N^t, \Big(\widehat{\widetilde{n}}^{\varphi_t}\Big)^{i-1} \widetilde{\Psi}_N^t\raa
\nonumber\\
&\leq \frac{1}{2}D_i \left(\alpha_N(\widetilde{n}^{i},\widetilde{\Psi}_N^t,\phi_t)
+\rho^{-1}\alpha_N(\widetilde{n}^{i-1},\widetilde{\Psi}_N^t,\phi_t)\right) \;.
\end{align}
We then define
\begin{equation}
\widetilde{\beta}_m(t) := \sum_{i=1}^m \rho^{-m+i} \alpha_N(\widetilde{n}^{i},\widetilde{\Psi}_N^t,\phi_t),
\end{equation}
such that similarly to part (a),
\begin{align}
\frac{d}{dt} \widetilde{\beta}_m(t) &\leq D_m \left(\widetilde{\beta}_m(t) + \rho^{-m} \right).
\end{align}
Finally, Gronwall's lemma and $\alpha_N(\widetilde{n}^{i},\widetilde{\Psi}_N^0,\phi_0) \leq 2^i \alpha_N(n^{i},\widetilde{\Psi}_N^0,\phi_0) + N^{-i}$ with the assumption on the initial conditions \eqref{cor_eq_ini_cond2} (and $\Lambda \geq 1$) yield \eqref{cordrei}.
\end{enumerate}
\end{proof}

\subsection{Proof of the Theorems}\label{proofs_thms}

We first give the proof of Theorem~\ref{side}.

\begin{proof}[Proof of Theorem~\ref{side}]
In \cite{dfpp} it is shown in equations (103)--(109) that
\begin{align}\label{micro_macro}
\norm{\rho_t^{(\mathrm{micro})} - \rho_t^{(\mathrm{macro})}} &\leq  \frac{C(t)^{2}}{\Lambda }+\frac{C(t)}{\Lambda ^{1/2}}+\alpha_N(n,\Psi_N^t,\phi_t)C(t)^{2}\nonumber\\&\quad+2\sqrt{\Lambda\alpha_N(n,\Psi_N^t,\phi_t) }C(t)+\Lambda\alpha_N(n,\Psi_N^t,\phi_t)
\end{align}
for some positive function $C(t)$ under the conditions of Assumption~\ref{main_assumption}. Note that the final estimate in the proof of Theorem~1.6 in \cite{dfpp} is written down slightly different. In the inequality above we have just directly used $\norm{q_1^{\varphi_t} \Psi_N^t}^2 = \alpha_N(n,\Psi_N,\phi_t)$ in the estimates (107), (108), and (109) from \cite{dfpp}. Note that \cite{dfpp} does not track explicitly what the time-dependence of the constants in the estimates is, so let us not do that here either. Simplifying \eqref{micro_macro} and using \eqref{corzwei} from Lemma~\ref{cor} we directly get
\begin{align}
\norm{\rho_t^{(\mathrm{micro})} - \rho_t^{(\mathrm{macro})}} &\leq C'(t) \Big( \Lambda^{-1} + \Lambda^{-1/2} + \sqrt{\Lambda\alpha_N(n,\Psi_N^t,\phi_t)} + \Lambda\alpha_N(n,\Psi_N^t,\phi_t) \Big) \nonumber\\
&\leq C'(j,t) \left( \Lambda^{-1} + \Lambda^{-1/2} + \sqrt{\frac{\Lambda}{\rho} + \Lambda \left(\frac{\Lambda}{\rho}\right)^j} + \frac{\Lambda}{\rho} + \Lambda \left(\frac{\Lambda}{\rho}\right)^j \right)
\end{align}
for any $j\in\NNN$ and some positive functions $C'(t), C'(j,t)$. Assuming $\Lambda \leq C\rho^{1-\varepsilon}$ we get for $j \geq \frac{1}{\varepsilon}$ that
\begin{align}
\norm{\rho_t^{(\mathrm{micro})} - \rho_t^{(\mathrm{macro})}} &\leq C''(t) \left( \Lambda^{-1} + \Lambda^{-1/2} + \sqrt{\frac{\Lambda}{\rho}} \, \right)
\end{align}
for some positive function $C''(t)$.
\end{proof}

In order to prove Theorem~\ref{main_Bog_thm}, we first establish convergence of $\Psi_N^t$ to $\widetilde{\Psi}_N^t$, similarly as it was done in \cite{mpp}. Note that the proof is very similar to other existing derivations of Bogoliubov theory, see, e.g., \cite{Lewin:2015a,nam:2015,nam:2016}.
\begin{lemma}\label{main}
Let $\Lambda\geq 1$, $T>0$, $v \in L^1(\RRR^3) \cap L^{\infty}(\RRR^3)$ non-negative, and $1 = \big\|\Psi_N^0\big\| = \big\|\widetilde{\Psi}_N^0\big\| = \|\varphi_0\|$. Assume
\begin{equation}\label{cond_on_ini_cond_main}
\sum_{k=0}^N \left(\frac{k}{N}\right)^{\ell} \norm{P_{k}^{\varphi_0} \widetilde{\Psi}_N^0}^2 \leq c_{\ell} \rho^{-\ell}
\end{equation}
for all $1 \leq \ell \leq 4$ and for some $c_{\ell} \in \RRR$. Also assume that there is a constant $M<\infty$, uniform in $\Lambda$ and $\rho$, such that the solution $\phi_t$ of the Hartree equation \eqref{meanfield} satisfies $\sup_{0\leq t \leq T}\|\phi_t\|_\infty<M\Lambda^{-1/2}$. Let $\Psi_N^t$ be the solution to \eqref{schroe} and $\widetilde{\Psi}_N^t$ to \eqref{psitilde}. Then for all $t\leq T$,
\begin{equation}\label{tilde_estimate_main}
\norm{\Psi_N^t - \widetilde{\Psi}_N^t} \leq \norm{\Psi_N^0 - \widetilde{\Psi}_N^0} + C_4^{1/2} \big( e^{D_4t} - 1 \big) \, \sqrt{\frac{\Lambda^3}{\rho}}
\end{equation}
with $C_4$ and $D_4$ defined as in \eqref{constants_from_thm}.
\end{lemma}

\begin{proof}[Proof of Lemma~\ref{main}]

First, recall from Definition~\ref{defalpha} that
\begin{equation}
W_{j,k}^{\varphi}:=\Lambda(N-1)\bigg(v (x_j-x_k)- (v*|\phi|^2)(x_j) - (v*|\phi|^2)(x_k) + 2\mu^{\varphi}\bigg)\;.
\end{equation}
Let us write
\begin{equation}
H_N = \sum_{j=1}^N h_j^{\varphi_t} + \frac{1}{2N(N-1)} \sum_{j \neq k} W^{\varphi_t}_{j,k}.
\end{equation}
Using \eqref{projonf} it follows that $p^{\varphi}_jW^{\varphi}_{j,k}p^{\varphi}_j=0$ for any $j\neq k$, such that we have
\begin{align}
H_N-\widetilde{H}_N^{\varphi_t} &= \frac{1}{N(N-1)}\sum_{j \neq k} \bigg( \frac{1}{2}q^{\varphi_t}_jq^{\varphi_t}_kW^{\varphi_t}_{j,k}q^{\varphi_t}_jq^{\varphi_t}_k + p^{\varphi_t}_jq^{\varphi_t}_kW^{\varphi_t}_{j,k}q^{\varphi_t}_jq^{\varphi_t}_k + q^{\varphi_t}_jq^{\varphi_t}_kW^{\varphi_t}_{j,k}p^{\varphi_t}_jq^{\varphi_t}_k\bigg).
\end{align}
We now use the Duhamel expansions
\begin{align}
\Psi_N^t - \widetilde\Psi_N^t = -i \int_0^t ds\, U(t,s) \Big( H_N - \widetilde{H}_N^{\varphi_s} \Big) \widetilde\Psi_N^s = -i \int_0^t ds\, \widetilde{U}(t,s) \Big( H_N - \widetilde{H}_N^{\varphi_s} \Big) \Psi_N^s,
\end{align}
where $U(t,s)$ and $\widetilde{U}(t,s)$ are the unitary propagators generated by $H_N$ and $\widetilde{H}_N^{\varphi_t}$. Then we find
\begin{align}\label{norm_diff_3q_4q}
&\|\Psi_N^t-\widetilde\Psi_N^t\|^2 \nonumber\\
&\quad= \laa \Psi_N^t, \Psi_N^t-\widetilde{\Psi}_N^t \raa - \laa \widetilde{\Psi}_N^t, \Psi_N^t-\widetilde{\Psi}_N^t \raa \nonumber\\
&\quad= -i \int_0^t ds\, \laa \Psi_N^s, \Big( H_N - \widetilde{H}_N^{\varphi_s} \Big) \widetilde{\Psi}_N^s \raa + i \int_0^t ds\, \laa \widetilde{\Psi}_N^s, \Big( H_N - \widetilde{H}_N^{\varphi_s} \Big) \Psi_N^s \raa \nonumber\\
&\quad= 2 \int_0^t ds\, \Im \laa \Psi_N^s - \widetilde{\Psi}_N^s, \Big( H_N - \widetilde{H}_N^{\varphi_s} \Big) \widetilde{\Psi}_N^s \raa \nonumber\\
&\quad= \int_0^t ds\, \Im \laa \Psi_N^s - \widetilde{\Psi}_N^s, \Big( q^{\varphi_s}_1q^{\varphi_s}_2W^{\varphi_s}_{1,2}q^{\varphi_s}_1q^{\varphi_s}_2 + 2p^{\varphi_s}_1q^{\varphi_s}_2\widetilde{W}^{\varphi_s}_{1,2}q^{\varphi_s}_1q^{\varphi_s}_2 + 2q^{\varphi_s}_1q^{\varphi_s}_2\widetilde{W}^{\varphi_s}_{1,2}p^{\varphi_s}_1q^{\varphi_s}_2\Big) \widetilde{\Psi}_N^s \raa,
\end{align}
where, as before, we set
\begin{equation}
\widetilde{W}_{j,k}^{\varphi} := \Lambda(N-1)\bigg(v (x_j-x_k)- (v*|\phi|^2)(x_j) - (v*|\phi|^2)(x_k)\bigg).
\end{equation}
In order to control the terms with one $p^{\varphi_s}$ projector, we introduce for $j>0$, (with a slight abuse of notation) $\widehat{n}^{-j} := \sum_{k=1}^N \left(\frac{N}{k}\right)^{j} P_k^{\varphi}$ for any normalized $\varphi \in L^2(\RRR^3)$. Here, we omit the superscript $\varphi$ for the $\,\widehat{\cdot}\,$ operators for ease of notation. Then, for any $\Phi_N,\widetilde{\Psi}_N\in L^2(\RRR^{3N})$, we find
\begin{align}
&\laa\Phi_N, p^{\varphi}_1q^{\varphi}_2 \widetilde{W}^{\varphi}_{1,2} q^{\varphi}_1q^{\varphi}_2 \widetilde{\Psi}_N\raa \nonumber\\
&\qquad= \laa\Phi_N, p^{\varphi}_1q^{\varphi}_2 \widehat{n}^{-1/2}\widehat{n}^{1/2} \widetilde{W}^{\varphi}_{1,2} q^{\varphi}_1q^{\varphi}_2 \widetilde{\Psi}_N\raa \nonumber\\
&\qquad= \laa\Phi_N, \widehat{n}^{-1/2} p^{\varphi}_1q^{\varphi}_2 \widetilde{W}^{\varphi}_{1,2} q^{\varphi}_1q^{\varphi}_2 \widehat{n}^{1/2}_{-1} \widetilde{\Psi}_N\raa \nonumber\\
&\qquad\leq \norm[op]{\widetilde{W}^{\varphi}_{1,2}p_1^{\varphi}} \norm{q^{\varphi}_2 \widehat{n}^{-1/2}\Phi_N} \norm{q^{\varphi}_1q^{\varphi}_2 \widehat{n}^{1/2}_{-1} \widetilde{\Psi}_N} \nonumber\\
&\qquad= \norm[op]{\widetilde{W}^{\varphi}_{1,2}p_1^{\varphi}} \bigscp{\Phi_N}{q^{\varphi}_2\widehat{n}^{-1}\Phi_N}^{1/2} \bigscp{\widetilde{\Psi}_N}{q^{\varphi}_1q^{\varphi}_2\widehat{n}_{-1}\widetilde{\Psi}_N}^{1/2} \nonumber\\
&\qquad= \norm[op]{\widetilde{W}^{\varphi}_{1,2}p_1^{\varphi}} \bigscp{\Phi_N}{\sum_{k=1}^N \frac{k}{N} \frac{N}{k} P_k^{\varphi}\Phi_N}^{1/2} \bigscp{\widetilde{\Psi}_N}{\sum_{k=2}^N \frac{k(k-1)}{N(N-1)} \frac{(k-1)}{N} P_k^{\varphi}\widetilde{\Psi}_N}^{1/2} \nonumber\\
&\qquad\leq \norm[op]{\widetilde{W}^{\varphi}_{1,2}p_1^{\varphi}} \norm{\Phi_N} \Big(2 \alpha_N(n^3,\widetilde{\Psi}_N,\varphi)\Big)^{1/2}.
\end{align}
In a similar way, we find
\begin{align}
&\laa\Phi_N, q^{\varphi}_1q^{\varphi}_2 \widetilde{W}^{\varphi}_{1,2} p^{\varphi}_1q^{\varphi}_2 \widetilde{\Psi}_N\raa \nonumber\\
&\qquad= \laa\Phi_N, q^{\varphi}_1q^{\varphi}_2 \widetilde{W}^{\varphi}_{1,2} p^{\varphi}_1q^{\varphi}_2 \widehat{n}^{-1}\widehat{n}\widetilde{\Psi}_N\raa \nonumber\\
&\qquad= \laa\Phi_N, q^{\varphi}_1q^{\varphi}_2 \widehat{n}^{-1}_{-1} \widetilde{W}^{\varphi}_{1,2} p^{\varphi}_1q^{\varphi}_2 \widehat{n} \widetilde{\Psi}_N\raa \nonumber\\
&\qquad\leq \norm[op]{\widetilde{W}^{\varphi}_{1,2}p_1^{\varphi}} \norm{q^{\varphi}_1q^{\varphi}_2 \widehat{n}^{-1}_{-1}\Phi_N} \norm{q^{\varphi}_2 \widehat{n} \widetilde{\Psi}_N} \nonumber\\
&\qquad= \norm[op]{\widetilde{W}^{\varphi}_{1,2}p_1^{\varphi}} \bigscp{\Phi_N}{q^{\varphi}_1q^{\varphi}_2\widehat{n}^{-2}_{-1}\Phi_N}^{1/2} \bigscp{\widetilde{\Psi}_N}{q^{\varphi}_2\widehat{n}^2 \widetilde{\Psi}_N}^{1/2} \nonumber\\
&\qquad= \norm[op]{\widetilde{W}^{\varphi}_{1,2}p_1^{\varphi}} \bigscp{\Phi_N}{\sum_{k=2}^N \frac{k(k-1)}{N(N-1)} \frac{N^2}{(k-1)^2} P_k^{\varphi}\Phi_N}^{1/2} \bigscp{\widetilde{\Psi}_N}{\sum_{k=1}^N \frac{k}{N} \frac{k^2}{N^2} P_k^{\varphi}\widetilde{\Psi}_N}^{1/2} \nonumber\\
&\qquad\leq \norm[op]{\widetilde{W}^{\varphi}_{1,2}p_1^{\varphi}} 2 \norm{\Phi_N} \, \alpha_N(n^3,\widetilde{\Psi}_N,\varphi)^{1/2}.
\end{align}
For controlling the term with four $q^{\varphi_s}$ projectors in \eqref{norm_diff_3q_4q} we use
\begin{align}
&\laa \Phi_N, q^{\varphi}_1q^{\varphi}_2W^{\varphi}_{1,2}q^{\varphi}_1q^{\varphi}_2 \widetilde{\Psi}_N \raa \nonumber\\
&\qquad= \laa \Phi_N, q^{\varphi}_1q^{\varphi}_2W^{\varphi}_{1,2}q^{\varphi}_1q^{\varphi}_2 \widehat{n}^{-1} \widehat{n} \widetilde{\Psi}_N \raa \nonumber\\
&\qquad= \laa \Phi_N, \widehat{n}^{-1} q^{\varphi}_1q^{\varphi}_2W^{\varphi}_{1,2}q^{\varphi}_1q^{\varphi}_2 \widehat{n} \widetilde{\Psi}_N \raa \nonumber\\
&\qquad\leq \norm[op]{W^{\varphi}_{1,2}} \norm{q^{\varphi}_1q^{\varphi}_2\widehat{n}^{-1}\Phi_N} \norm{q^{\varphi}_1q^{\varphi}_2 \widehat{n} \widetilde{\Psi}_N} \nonumber\\
&\qquad= \norm[op]{W^{\varphi}_{1,2}} \bigscp{\Phi_N}{q^{\varphi}_1q^{\varphi}_2\widehat{n}^{-2}\Phi_N}^{1/2} \bigscp{\widetilde{\Psi}_N}{q^{\varphi}_1q^{\varphi}_2\widehat{n}^2 \widetilde{\Psi}_N}^{1/2} \nonumber\\
&\qquad= \norm[op]{W^{\varphi}_{1,2}} \bigscp{\Phi_N}{\sum_{k=2}^N \frac{k(k-1)}{N(N-1)} \frac{N^2}{k^2} P_k^{\varphi} \Phi_N}^{1/2} \bigscp{\widetilde{\Psi}_N}{\sum_{k=2}^N \frac{k(k-1)}{N(N-1)} \frac{k^2}{N^2} P_k^{\varphi} \widetilde{\Psi}_N}^{1/2} \nonumber\\
&\qquad\leq \norm[op]{W^{\varphi}_{1,2}} \sqrt{2} \norm{\Phi_N} \alpha_N(n^4,\widetilde{\Psi}_N,\varphi)^{1/2}.
\end{align}
It follows that
\begin{align}
&\|\Psi_N^t-\widetilde\Psi_N^t\|^2 \nonumber\\
&\quad\leq \Big|\int_0^t ds\, \laa\Psi_N^s - \widetilde\Psi_N^s, q^{\varphi_s}_1q^{\varphi_s}_2W^{\varphi_s}_{1,2}q^{\varphi_s}_1q^{\varphi_s}_2\widetilde\Psi_N^s\raa \Big| \nonumber\\
&\qquad + 2 \Big|\int_0^t ds\, \laa \Psi_N^s-\widetilde{\Psi}_N^s, \left( p^{\varphi_s}_1q^{\varphi_s}_2\widetilde{W}^{\varphi_s}_{1,2}q^{\varphi_s}_1q^{\varphi_s}_2 + q^{\varphi_s}_1q^{\varphi_s}_2\widetilde{W}^{\varphi_s}_{1,2}p^{\varphi_s}_1q^{\varphi_s}_2\right)  \widetilde\Psi_N^s\raa \Big| \nonumber\\
&\quad\leq 8 \int_0^t ds\, \norm{\Psi_N^s-\widetilde{\Psi}_N^s} \bigg(\|W^{\varphi_s}_{1,2}\|_{op} \alpha_N(n^4,\widetilde{\Psi}_N^s,\varphi_s)^{1/2} + \|\widetilde{W}^{\varphi_s}_{1,2}p^{\varphi_s}_1\|_{op} \alpha_N(n^3,\widetilde{\Psi}_N^s,\varphi_s)^{1/2} \bigg).
\end{align}
Let us consider $\Psi_N^0 = \widetilde{\Psi}_N^0$ first. Then this inequality implies that
\begin{align}
\norm{\Psi_N^t - \widetilde{\Psi}_N^t} &\leq 4 \int_0^t ds\, \bigg( \|W^{\varphi_s}_{1,2}\|_{op} \alpha_N(n^4,\widetilde{\Psi}_N^s,\varphi_s)^{1/2} + \|\widetilde{W}^{\varphi_s}_{1,2}p^{\varphi_s}_1\|_{op} \alpha_N(n^3,\widetilde{\Psi}_N^s,\varphi_s)^{1/2} \bigg).
\end{align}
Using \eqref{cordrei} from Lemma~\ref{cor} with increasing constants $\widetilde{C}_j$ and $D_j$, $\norm[op]{W_{1,2}^{\varphi_s}} \leq 4 \Lambda N \norm[\infty]{v}$ and $\|\widetilde{W}^{\varphi_s}_{1,2}p^{\varphi_s}_1\|_{op} \leq 3\Lambda N \norm[2]{v} \norm[\infty]{\varphi_s}$ we find,
\begin{align}
\norm{\Psi_N^t - \widetilde{\Psi}_N^t} &\leq 4 \int_0^t ds\, \bigg( 4 \Lambda N \|v\|_\infty \big(\widetilde{C}_4 e^{D_4s}\rho^{-4}\big)^{1/2} + 3\Lambda N \|v\|_2 M \Lambda^{-1/2} \big(\widetilde{C}_3 e^{D_3s}\rho^{-3}\big)^{1/2} \bigg) \nonumber\\
&\leq \int_0^t ds\, 8 \Big( 4 \|v\|_\infty + 3 M \|v\|_2 \Big) \big(\widetilde{C}_4 e^{D_4s}\big)^{1/2} \Lambda^{3/2} \rho^{-1/2} \nonumber\\
&\leq \widetilde{C}_4^{1/2} \big( e^{D_4t} - 1 \big) \Lambda^{3/2} \rho^{-1/2}.
\end{align}
If $\Psi_N^0 \neq \widetilde{\Psi}_N^0$ we use unitarity of the propagator and the triangle inequality, i.e.,
\begin{align}
\norm{U(t,0)\Psi_N^0 - \widetilde{U}(t,0)\widetilde{\Psi}_N^0} &= \norm{U(t,0)\Psi_N^0 - U(t,0)\widetilde{\Psi}_N^0 + U(t,0)\widetilde{\Psi}_N^0 - \widetilde{U}(t,0)\widetilde{\Psi}_N^0} \nonumber\\
&\leq \norm{\Psi_N^0 - \widetilde{\Psi}_N^0} + \norm{U(t,0)\widetilde{\Psi}_N^0 - \widetilde{U}(t,0)\widetilde{\Psi}_N^0},
\end{align}
which proves \eqref{tilde_estimate_main}.
\end{proof}

Next, we prove convergence of $\widetilde{\Psi}_N^t$ to the solution of the Bogoliubov equation \eqref{Bog_eq}. As preparation, let us decompose a solution $\widetilde{\Psi}_N^t$ to the equation $i\partial_t \widetilde{\Psi}_N^t = \widetilde{H}_N^{\varphi_t} \widetilde{\Psi}_N^t$ as in \eqref{decomp}. Then a direct calculation shows that the equation the corresponding $\widetilde{\chi}^{(k)}_t$'s satisfy is for all $0\leq k \leq N$ given by
\begin{align}\label{chi_k}
&i\partial_t \widetilde{\chi}^{(k)}_t(x_1,\ldots,x_k) \nonumber\\
&\quad= \sum_{i=1}^k \bigg(h^{\varphi_t}_i + \frac{N-k}{N} K^{(1)}_t(x_i)\bigg)\widetilde{\chi}^{(k)}_t \nonumber\\
&\qquad + \frac{1}{2}\frac{\sqrt{(N-k+2)(N-k+1)}}{N} \frac{1}{\sqrt{k(k-1)}} \sum_{1 \leq i < j \leq k} K^{(2)}_t(x_i,x_j) \widetilde{\chi}^{(k-2)}_t(x_1,\ldots,x_k \setminus x_i \setminus x_j) \nonumber\\
&\qquad + \frac{1}{2}\frac{\sqrt{(N-k)(N-k-1)}}{N} \sqrt{(k+1)(k+2)} \int dx \,dy\, \overline{K^{(2)}_t(x,y)} \widetilde{\chi}^{(k+2)}_t(x_1,\ldots,x_k,x,y).
\end{align}
In the equation for $k=0$ the first and second line in \eqref{chi_k} are understood to be zero, and so are for $k=1$ the second line, and for $k=N-1$ and $k=N$ the last line. Note that there are two differences between the equations \eqref{chi_k} and the Bogoliubov equations \eqref{chi_Bog_k}. One is the combinatorial factors in front of the $K^{(1)}_t$ and $K^{(2)}_t$ terms. Since the combinatorial factors are approximately given by $1 + \frac{k}{N}$ and we already proved that the expectation value of $\frac{k}{N}$ is small, we can control this error term. The other difference is that the the equations for $k\geq N-1$ are different. While $\tilde{\chi}_t^{(k)} = 0$ for all $k > N$, the $\chi_t^{\Bog,(k)}$ generally do not vanish for $k>N$. However, for $k>N$ the norm of $\chi_t^{\Bog,(k)}$ will be very small, since the norm of $\chi_t^{\Bog,(k)}$ for all $k$ of order $N$ is very small. Thus, the error terms coming from those two sources are small which allows us to prove the following lemma, similarly as it was done in \cite{mpp}. As mentioned above, the proof is very similar to other existing derivations of Bogoliubov theory, see, e.g., \cite{Lewin:2015a,nam:2015,nam:2016}.

\begin{lemma}\label{lemma_Bog}
Let $\Lambda\geq 1$, $T>0$, $v \in L^1(\RRR^3) \cap L^{\infty}(\RRR^3)$ non-negative, and $1 = \big\| \chi_0^{\Bog} \big\| = \|\varphi_0\|$. We set
\begin{align}\label{epsilon_0}
\sum_{k=N+1}^{\infty} \big\| \chi_0^{\Bog,(k)} \big\|^2 := \varepsilon_0,
\end{align}
and $\widetilde{\Psi}_N^0 = (1-\varepsilon_0)^{-1/2}\sum_{k=0}^N \varphi^{\otimes (N-k)}_0 \otimes_s \chi^{\Bog,(k)}_0$, such that $\|\widetilde{\Psi}_N^0\| = 1$. Assume
\begin{equation}\label{cond_on_ini_cond_main_in_lemma}
\sum_{k=0}^N \left(\frac{k}{N}\right)^{\ell} \norm{P_{k}^{\varphi_0} \widetilde{\Psi}_N^0}^2 \leq c_{\ell} \rho^{-\ell}
\end{equation}
for all $1 \leq \ell \leq 4$ and for some $c_{\ell} \in \RRR$. Also assume that there is a constant $M<\infty$, uniform in $\Lambda$ and $\rho$, such that the solution $\phi_t$ of the Hartree equation \eqref{meanfield} satisfies $\sup_{0\leq t \leq T}\|\phi_t\|_\infty<M\Lambda^{-1/2}$. Let $\widetilde{\Psi}_N^t$ be the solution to \eqref{psitilde} and $\chi^{\Bog}_t$ to \eqref{Bog_eq}. Then for all $t\leq T$,
\begin{equation}\label{Bog_bound}
\bigg\|\widetilde{\Psi}_N^t - \sum_{k=0}^N \varphi_t^{\otimes (N-k)} \otimes_s \chi^{\Bog,(k)}_t\bigg\| \leq \sqrt{2\varepsilon_0} + \widetilde{C}_4^{1/2} \big( e^{D_4t} - 1 \big) \frac{\Lambda}{\rho}
\end{equation}
with $C_4$ and $D_4$ defined as in \eqref{constants_from_thm}.
\end{lemma}

\begin{remark}
Note that the error term for the norm difference of $\widetilde{\Psi}_N^t$ and the Bogoliubov state is small if $\Lambda/\rho$ is small, i.e., it is much less restrictive than the main error term of order $(\Lambda^3/\rho)^{1/2}$ for the norm difference between $\Psi^t_N$ and $\widetilde{\Psi}_N^t$ in Lemma~\ref{main}.
\end{remark}

\begin{proof}[Proof of Lemma~\ref{lemma_Bog}]
Let us first note that
\begin{align}\label{tilde_Bog_norm}
\bigg\|\widetilde{\Psi}_N^t - \sum_{k=0}^N \varphi_t^{\otimes (N-k)} \otimes_s \chi^{\Bog,(k)}_t\bigg\|^2_{L^2(\RRR^{3N})} &= \bigg\|\sum_{k=0}^N \varphi_t^{\otimes (N-k)} \otimes_s \Big(\chi^{\Bog,(k)}_t - \tilde{\chi}_t^{(k)}\Big)\bigg\|^2_{L^2(\RRR^{3N})} \nonumber\\
&= \sum_{k=0}^N \Big\| \chi^{\Bog,(k)}_t - \tilde{\chi}_t^{(k)} \Big\|_{L^2(\RRR^{3k})}^2 \nonumber\\
&\leq \sum_{k=0}^\infty \Big\| \chi^{\Bog,(k)}_t - \tilde{\chi}_t^{(k)} \Big\|_{L^2(\RRR^{3k})}^2 \nonumber \\
&= 2 - 2 \, \Re \sum_{k=0}^N \bigscp{\chi^{\Bog,(k)}_t}{\tilde{\chi}^{(k)}_t}.
\end{align}
A direct calculation yields
\small
\begin{subequations}
\begin{align}
&\frac{d}{dt} \sum_{k=0}^\infty \norm[L^2(\RRR^{3k})]{\chi^{\Bog,(k)}_t - \tilde{\chi}_t^{(k)}}^2 \nonumber\\
\label{Bog_est_1}
&\quad = 2 \sum_{k=1}^N \Im \bigscp{\chi^{\Bog,(k)}_t - \tilde{\chi}_t^{(k)}}{\left( 1-\frac{N-k}{N}\right) \int \! dx\, K^{(1)}_t(x) \big(a^\dagger_x a_x\tilde{\chi}_t\big)^{(k)}} \\
\label{Bog_est_2}
&\qquad + \sum_{k=2}^N \Im \bigscp{\chi^{\Bog,(k)}_t - \tilde{\chi}_t^{(k)}}{\left( 1 - \frac{\sqrt{(N-k+2)(N-k+1)}}{N}\right) \int\! dx \! \int\! dy\, K^{(2)}_t(x,y) \big(a_x^\dagger a_y^\dagger \tilde{\chi}_t\big)^{(k)}} \\
\label{Bog_est_3}
&\qquad + \sum_{k=0}^{N-2} \Im \bigscp{\chi^{\Bog,(k)}_t - \tilde{\chi}_t^{(k)}}{\left(1 - \frac{\sqrt{(N-k)(N-k-1)}}{N}\right) \int\! dx \! \int\! dy\, \overline{K^{(2)}_t(x,y)} \big(a_x a_y\tilde{\chi}_t\big)^{(k)}} \\
\label{Bog_est_4}
&\qquad + \sum_{k=N-1}^N \Im \bigscp{\int\! dx \! \int\! dy\, \overline{K^{(2)}_t(x,y)} \big(a_x a_y\tilde{\chi}^{\Bog}_t\big)^{(k)}}{\tilde{\chi}_t^{(k)}}.
\end{align}
\end{subequations}
\normalsize
The first three terms come from the difference in the combinatorial factors in the equations for $\tilde{\chi}_t^{(k)}$ and $\chi^{\Bog,(k)}_t$, and the last term comes from the fact that $\chi^{\Bog,(k)}_t$ can have non-zero $k$-particle sectors for $k>N$. To estimates the first three terms, note that the combinatorial factors are all bounded by $\frac{k+1}{N}$, and also $K^{(1)}_t$ and $K^{(2)}_t$ are bounded, more precisely,
\begin{align}
\big\|K^{(1)}_t\big\|_{\Hilbert} \leq \Lambda \norm[\infty]{\varphi_t}^2 \norm[1]{v} \leq M \norm[1]{v},
\end{align}
and
\begin{align}
\big\|K^{(2)}_t\big\|_{\Hilbert^{\otimes 2}} \leq \Lambda \norm[\infty]{\varphi_t}^2 \norm[2]{v} \leq M \norm[2]{v}.
\end{align}
Then we can use that for any $\big(\chi^{(k)}\big)_k$,
\begin{align}
\norm{\int \! dx\, K^{(1)}_t(x) \big(a^\dagger_x a_x\chi\big)^{(k)}} &\leq k \big\|K^{(1)}_t\big\|_{\Hilbert} \norm{\chi^{(k)}} \leq M \norm[1]{v} k \norm{\chi^{(k)}}, \\
\norm{\int\! dx \! \int\! dy\, K^{(2)}_t(x,y) \big(a_x^\dagger a_y^\dagger \chi\big)^{(k)}} &\leq \sqrt{k(k-1)} \big\|K^{(2)}_t\big\|_{\Hilbert^{\otimes 2}} \norm{\chi^{(k-2)}}\nonumber\\ 
&\leq M \norm[2]{v} \sqrt{k(k-1)} \norm{\chi^{(k-2)}},
\end{align}
and
\begin{align}
\norm{\int\! dx \! \int\! dy\, \overline{K^{(2)}_t(x,y)} \big(a_x a_y \chi\big)^{(k)}} \leq& \sqrt{(k+1)(k+2)} \big\|K^{(2)}_t\big\|_{\Hilbert^{\otimes 2}} \norm{\chi^{(k+2)}} \nonumber\\\leq & M \norm[2]{v} \sqrt{(k+1)(k+2)}\norm{\chi^{(k+2)}}.
\end{align}
So for the first three terms in the derivative of the norm difference we find the bounds
\begin{align}
|\eqref{Bog_est_1}| &\leq 2 M \norm[1]{v} \sum_{k=0}^N \Big\| \chi^{\Bog,(k)}_t - \tilde{\chi}_t^{(k)} \Big\| \frac{k^2}{N}\norm{\tilde{\chi}^{(k)}_t} \nonumber \\
& \leq 2 M \norm[1]{v} \sqrt{\sum_{k=0}^{\infty} \Big\| \chi^{\Bog,(k)}_t - \tilde{\chi}_t^{(k)}\Big\|^2} \sqrt{N^2 \alpha_N(n^4,\widetilde{\Psi}_N^t,\varphi_t)}, \\
|\eqref{Bog_est_2}| &\leq M \norm[2]{v} \sum_{k=2}^N \Big\| \chi^{\Bog,(k)}_t - \tilde{\chi}_t^{(k)} \Big\| \frac{\sqrt{k(k-1)^3}}{N}\norm{\tilde{\chi}^{(k-2)}_t}, \nonumber\\
& \leq 4 M \norm[2]{v} \sqrt{\sum_{k=0}^{\infty} \Big\| \chi^{\Bog,(k)}_t - \tilde{\chi}_t^{(k)}\Big\|^2} \sqrt{N^2 \alpha_N(n^4,\widetilde{\Psi}_N^t,\varphi_t) + N^{-2}}, \\
|\eqref{Bog_est_3}| &\leq M \norm[2]{v} \sum_{k=0}^{N-2} \Big\| \chi^{\Bog,(k)}_t - \tilde{\chi}_t^{(k)} \Big\| \frac{\sqrt{(k+1)^3(k+2)}}{N}\norm{\tilde{\chi}^{(k+2)}_t} \nonumber\\
& \leq M \norm[2]{v} \sqrt{\sum_{k=0}^{\infty} \Big\| \chi^{\Bog,(k)}_t - \tilde{\chi}_t^{(k)}\Big\|^2} \sqrt{N^2 \alpha_N(n^4,\widetilde{\Psi}_N^t,\varphi_t)}.
\end{align}
The term \eqref{Bog_est_4} can be estimated by
\begin{align}
|\eqref{Bog_est_4}| &\leq \sum_{k=N-1}^N M \norm[2]{v} (k+2) \Big\| \chi^{\Bog,(k+2)}_t \Big\| \, \Big\| \tilde{\chi}^{(k)}_t \Big\| \nonumber\\
&\leq 4 M\norm[2]{v}(N+2) \sum_{k=N-1}^N \Big\| \chi^{\Bog,(k+2)}_t \Big\| \, \Big\| \mathbbm{1}(k\leq N) \frac{k^2}{N^2} \tilde{\chi}^{(k)}_t \Big\| \nonumber\\
&\leq 8 M\norm[2]{v} \sqrt{\sum_{k=0}^{\infty} \Big\| \chi^{\Bog,(k)}_t - \tilde{\chi}_t^{(k)}\Big\|^2} \sqrt{N^2 \alpha_N(n^4,\widetilde{\Psi}_N^t,\varphi_t)}.
\end{align}
Summarizing our estimates and using \eqref{cordrei} from Lemma~\ref{cor}, we find
\begin{align}
&\frac{d}{dt} \sum_{k=0}^\infty \norm[L^2(\RRR^{3k})]{\chi^{\Bog,(k)}_t - \tilde{\chi}_t^{(k)}}^2 \nonumber\\
&\quad \leq 13 M \big( \norm[1]{v} + \norm[2]{v} \big) \sqrt{\sum_{k=0}^{\infty} \Big\| \chi^{\Bog,(k)}_t - \tilde{\chi}_t^{(k)}\Big\|^2} \sqrt{N^2 \alpha_N(n^4,\widetilde{\Psi}_N^t,\varphi_t) + N^{-2}} \nonumber\\
& \quad \leq 13 M \big( \norm[1]{v} + \norm[2]{v} \big) \sqrt{\sum_{k=0}^{\infty} \Big\| \chi^{\Bog,(k)}_t - \tilde{\chi}_t^{(k)}\Big\|^2} \sqrt{2 \widetilde{C}_4 e^{D_4t} \frac{\Lambda^2}{\rho^2}}.
\end{align}
Integrating and putting together the constants give
\begin{equation}
\sqrt{\sum_{k=0}^\infty \norm[L^2(\RRR^{3k})]{\chi^{\Bog,(k)}_t - \tilde{\chi}_t^{(k)}}^2} \leq \sqrt{\sum_{k=0}^\infty \norm[L^2(\RRR^{3k})]{\chi^{\Bog,(k)}_0 - \tilde{\chi}_0^{(k)}}^2} + \sqrt{\widetilde{C}_4} \Big( e^{D_4t} - 1 \Big) \frac{\Lambda}{\rho}.
\end{equation}
Finally, note that with the notation from \eqref{epsilon_0} we have
\begin{align}\label{Bog_tilde_epsilon}
\sum_{k=0}^\infty \norm[L^2(\RRR^{3k})]{\chi^{\Bog,(k)}_0 - \tilde{\chi}_0^{(k)}}^2 &= \sum_{k=0}^N \norm[L^2(\RRR^{3k})]{\chi^{\Bog,(k)}_0 - \tilde{\chi}_0^{(k)}}^2 + \sum_{k=N+1}^{\infty} \norm[L^2(\RRR^{3k})]{\chi^{\Bog,(k)}_0}^2 \nonumber\\
&= \left( (1-\varepsilon_0)^{-1/2} - 1 \right)^2 (1-\varepsilon_0) + \varepsilon_0 \nonumber\\
&\leq 2 \varepsilon_0,
\end{align}
which proves \eqref{Bog_bound}.
\end{proof}

By the triangle inequality, we directly arrive at a bound for the norm difference of $\Psi^t_N$ and the Bogoliubov state.

\begin{proof}[Proof of Theorem~\ref{main_Bog_thm}]
As in Lemma~\ref{lemma_Bog}, let us define $\widetilde{\Psi}_N^0 = (1-\varepsilon_0)^{-1/2}\sum_{k=0}^N \varphi^{\otimes (N-k)}_0 \otimes_s \chi^{\Bog,(k)}_0$ with $\varepsilon_0 := \sum_{k=N+1}^{\infty} \big\| \chi_0^{\Bog,(k)} \big\|^2$, such that $\|\widetilde{\Psi}_N^0\| = 1$, and let  $\widetilde{\Psi}_N^t$ be the solution to \eqref{psitilde} with initial condition $\widetilde{\Psi}_N^0$. In order to express $\varepsilon_0$ in terms of the initial norm difference, we estimate
\begin{equation}\label{epsilon_ini_norm}
\bigg\|\Psi_N^0 - \sum_{k=0}^N \varphi_0^{\otimes (N-k)} \otimes_s \chi^{\Bog,(k)}_0\bigg\|^2 \geq 2(1-\sqrt{1-\varepsilon_0}) - \varepsilon_0 \geq \frac{\varepsilon_0^2}{4}. 
\end{equation}
Let us assume for the moment that 
\begin{equation}\label{norm_assumption}
\bigg\|\Psi_N^0 - \sum_{k=0}^N \varphi_0^{\otimes (N-k)} \otimes_s \chi^{\Bog,(k)}_0\bigg\| \leq \frac{1}{4},
\end{equation}
such that $\varepsilon_0 \leq \frac{1}{2}$ Then the condition \eqref{cond_on_ini_cond_main} from Lemma~\ref{main} holds, since
\begin{align}
\sum_{k=0}^N \left(\frac{k}{N}\right)^{\ell} \norm{P_{k}^{\varphi_0} \widetilde{\Psi}_N^0}^2 &= \sum_{k=0}^N \left(\frac{k}{N}\right)^{\ell} (1-\varepsilon_0)^{-1} \norm{\chi^{\Bog,(k)}_0}^2 \leq (1-\varepsilon_0)^{-1} c_{\ell} \rho^{-\ell} \leq 2 c_{\ell} \rho^{-\ell}
\end{align}
due to the assumption \eqref{cond_on_ini_cond_main_in_thm}. Then we get with the triangle inequality, Lemma~\ref{main}, and Lemma~\ref{lemma_Bog} that
\begin{align}
\bigg\|\Psi_N^t - \sum_{k=0}^N \varphi_t^{\otimes (N-k)} \otimes_s \chi^{\Bog,(k)}_t\bigg\| &\leq \bigg\|\Psi_N^t - \widetilde{\Psi}_N^t \bigg\| + \bigg\|\widetilde{\Psi}_N^t - \sum_{k=0}^N \varphi_t^{\otimes (N-k)} \otimes_s \chi^{\Bog,(k)}_t\bigg\| \nonumber\\
&\leq \bigg\|\Psi_N^0 - \widetilde{\Psi}_N^0 \bigg\| + 2 C(t)\sqrt{\frac{\Lambda^3}{\rho}} + \sqrt{2\varepsilon_0} + 2 C(t) \frac{\Lambda}{\rho}.
\end{align}
Note that due to \eqref{tilde_Bog_norm} and \eqref{Bog_tilde_epsilon},
\begin{align}
\bigg\|\Psi_N^0 - \widetilde{\Psi}_N^0 \bigg\| \leq \bigg\|\Psi_N^0 - \sum_{k=0}^N \varphi_0^{\otimes (N-k)} \otimes_s \chi^{\Bog,(k)}_0\bigg\| + \sqrt{2\varepsilon_0}.
\end{align}
Then \eqref{Bog_bound_psi_thm} follows from using \eqref{epsilon_ini_norm} and $\Lambda \geq 1$. Inequality \eqref{Bog_bound_psi_thm} trivially also holds when the initial norm difference in \eqref{norm_assumption} is bigger than $1/4$.
\end{proof}

\noindent{\textbf{Acknowledgments.}} We are grateful for the hospitality of the Institute for Advanced Study and Central China Normal University (CCNU), where parts of this work were done. The comments of the referees, which greatly improved this article, are especially acknowledged. We would like to thank Phan Th\`{a}nh Nam for helpful comments on the article and interesting discussions that led to Equation~\eqref{explicit_eta_solution}. S.\,P.\ gratefully acknowledges support from the German Academic Exchange Service (DAAD) and the National Science Foundation under agreement No.\ DMS-1128155, and thanks the University of Washington for hospitality. A.\,S.\ is partially supported by NSF DMS grant 01600749 and CNSF grant 11671163.

\end{document}